\newtheorem{definition}{Definition}
\newtheorem{proposition}{Proposition}
\newtheorem{theorem}{Theorem}
\newtheorem{lemma}{Lemma}
\newtheorem{example}{Example}
\newtheorem{assumption}{Assumption}
\newenvironment{proof}
{\noindent {\bf Proof} \\} {\hfill$\Box$}
\newcommand{\apart}{\#}
\newcommand{\wapart}{\#^w}
\newcommand{\separates}{\not \approx}
\newcommand{\notseparates}{\approx}
\newcommand{\dseparates}{\not \equiv}
\newcommand{\notdseparates}{\equiv}
\newcommand{\witness}[2]{w(#1,#2)}
\newcommand{\wset}[1]{W(#1)}
\newcommand{\diff}[1]{D(M,M_I,V,#1)}
\newcommand{\diffw}[1]{DW(M,M_I,V,W,#1)}
\newcommand{\absolute}[1]{|#1|}
\newcommand{\Real}{\mathbb{R}}
\newcommand{\faultdomain}{{\mathcal F}}
\newcommand{\faultdomainbounded}[1]{{\mathcal F}^{#1}}
\begin{document}

\title{Complete FSM Testing Using Strong Separability}



 \author{\IEEEauthorblockN{Robert M. Hierons}
 		\IEEEauthorblockA{Department of Computer Science\\
 			University of Sheffield, UK}
 		\and
 		\IEEEauthorblockN{Mohammad Reza Mousavi}
 		\IEEEauthorblockA{Department of Informatics\\
 		King's College London, UK}}


\maketitle

\begin{abstract}
Apartness is a concept developed in constructive mathematics, which has resurfaced as a powerful notion for separating states in the area of model learning and model-based testing. We identify some fundamental shortcomings of apartness in quantitative models, such as in hybrid and stochastic systems. We propose a closely-related alternative, called strong separability and show that using it to replace apartness addresses the identified shortcomings. We adapt a well-known complete model-based testing method, called the Harmonized State Identifiers (HSI) method, to adopt the proposed notion of strong separability. We prove that the adapted HSI method is complete.
As far as we are aware, this is the first work to show how complete test suites can be generated for quantitative models such as those found in the development of cyber-physical systems.
\end{abstract}

\section{Introduction}

Testing is an important part of the software development process but has traditionally been a largely manual process and so expensive and error-prone.
This has led to significant interest in techniques that support systematic, automated test generation.
Model-based testing (MBT) forms an  important class of such techniques in which test generation is based on a formal model (or \emph{specification}) that defines the set of  behaviours that a correct \emph{system under test (SUT)} may have.
There are many MBT techniques, with corresponding tools and evidence of effectiveness when applied to industrial systems (see,  e.g.,~\cite{chow78,fujiwara91,hennie64,luo94a,MillerS12,Tretmans08,vasilevskii73}).
We refer to survey papers for an overview of the field and its developments ~\cite{HieronsBBCDDGHKKLSVWZ09,lee94,Shafie2022,UttingPL12}.

In order to formally reason about the  effectiveness of an MBT technique, it is normal to assume a common semantic domain between the specification and the SUT; in other words, the SUT is assumed to behave like an unknown model $M_I$ that can be expressed using the same formalism as the specification $M$
\cite{gaudel95}.
Much of the MBT work has concerned state-based semantic domains: either labelled transition systems (LTSs) \cite{Tretmans08} or finite state machines (FSMs) \cite{lee96}.
Typically, the state-based models of the SUT are not produced by the developer/tester and are only hypothesised as the common semantic domain to ground the MBT approach on formal foundations.

This paper aims to widen the applicability of  FSM-based test techniques to a range of modern systems for which test generation is particularly challenging.
Interest in FSM-based testing
goes back to the seminal work of
Moore in 1956~\cite{moore56}, with Hennie introducing a test generation algorithm in 1964~\cite{hennie64}. 
The behaviour of an FSM is defined by transitions between states, with a transition having a corresponding input and output.
FSM-based test generation techniques typically aim to find two types of faults: output faults (a transition produces the wrong output); and state-transfer faults (a transition takes the SUT to the wrong state). State-transfer faults can lead to the SUT having more states than the specification FSM.
If there can only be output faults then it is sufficient to generate and execute tests that cover all transitions of the specification $M$.
However, this approach is known to be insufficient in situations in which there might be state-transfer faults~\cite{lee96}.

There are many FSM-based test generation techniques that aim to find state-transfer faults in addition to output faults \cite{BroyEtAl2005};
these techniques utilise input sequences that \emph{separate} states\footnote{An input sequence $\bar{x}$ separates two states $s_1$ and $s_2$ of $M$ if the input of $\bar{x}$ leads to different output sequences when applied in $s_1$ and $s_2$.} of the specification $M$ in order to \emph{identify} (or \emph{check}) states.
Most FSM-based test generations concern specification FSMs that are minimal, deterministic, and completely-specified\footnote{These terms are formally defined in Section~\ref{section:FSM}.} and we also consider such FSMs.
Test generation methods for other classes of FSMs are seen as future work and are discussed further in the Conclusions.

Of particular interest are test generation techniques that produce $m$-complete test suites:
test suites that are guaranteed to determine correctness as long as the SUT behaves like an (unknown) FSM that has at most $m$  states.
Possibly the best known such test generation technique is the W-method \cite{chow78,vasilevskii73},
which uses a characterising set to identify states (separate a state from all other states of the specification).
A key benefit of using a characterising set is that every (minimal, completely-specified, and deterministic) FSM has a characterising set and such a set can be found in low-order polynomial time.
Two other, more recent,
FSM-based test generation techniques,
the Wp-method~\cite{fujiwara91} and HSI-method~\cite{luo94a},
also use (prefixes of) sequences from a characterising set in order to identify states and return smaller $m$-complete test suites.

Now consider what happens if we wish to test a system in which an output is drawn from a continuous domain and observations are noisy. This is characteristic of cyber-physical systems \cite{KhakpourM15} and the conformance problem for such systems has been extensively researched \cite{BreugelHMW05,AbbasMF14,DeshmukhMP17}.   
This situation could occur, for example, in black-box testing of an autonomous system;
the outputs from a control system may well be discrete but these outputs are not actually observed in testing.
Instead, the tester uses sensors (e.g., radars and LIDARs) to observe the physical interaction of the autonomous system with its environment.
In such situations, observations are imprecise and we should not require that the specification and SUT produce the \emph{same} output when given an input sequence.
Testing has to take into account this imprecision and might do this by defining a metric $m$ between observations (trajectories) and using a threshold $t$ to define when two observations should be considered to be similar \cite{AlfaroFS09,BreugelHMW05}.
Then
two observations $o_1$ and $o_2$ are considered to be similar if $m(o_1,o_2) \leq t$ and an observation $o_2$ constitutes a failure if the expected observation $o_1$ is such that 
$m(o_1,o_2) > t$.
Separability is thus defined in terms of such a threshold being exceeded.

We obtain a similar situation when checking conformance between a stochastic state-based specification model and its implementation \cite{BreugelW01,BreugelHMW05,DESHARNAIS2002163}; this is a practical problem, e.g., in  the automotive domain and has been studied in the literature \cite{AraujoH0V20,QinEtAl2023}. In a stochastic FSM specification, for each state and input pair, a transition defines a distribution over outputs.
Testing with input sequence $\bar{x}$ then involves applying $\bar{x}$ multiple times,
obtaining a sample of values, and using a statistical test to check whether, with a stated confidence, that this sample is consistent with the specified distribution~\cite{HMM09}.
Typically, the statistical test will involve applying a measure (a test statistic) and comparing this against a fixed value (threshold) that depends upon the required confidence. 
So, again, we are comparing a measure against a threshold.
Note, however, that in this case there is the potential to reduce the threshold by increasing the number of times that a test (input) sequence is applied.

We need to reconsider what we mean by correctness (conformance) in scenarios such as those sketched above.
It no longer makes sense to require that, for each input sequence $\bar{x}$, the SUT and specification produce the \emph{same} output sequence.
Instead,
we might require that the output sequence produced by the SUT (i.e., observed in testing) is sufficiently \emph{similar} to that produced by the specification \cite{BreugelHMW05,AbbasMF14,DeshmukhMP17}.
This leads to a different conformance relation and so the proofs of completeness, for current FSM-based test techniques, do not apply.
In this paper we also show that the W, Wp, and HSI methods need not be complete when using such conformance relations.

The aim of the work described in this paper is to generalise FSM-based test techniques to conformance relations such as those described above.
One option might have been to adapt current FSM-based test techniques to a particular application domain.
However, we aimed to produce a more general solution that can be applied across a range of application domains.

We took as a starting point recent work by
Vaandrager \cite{Vaandrager24} that shows that it is possible to reason about the completeness of several FSM-based test techniques in terms of \emph{apartness}.
Apartness is a form of inequality
used in constructive mathematics.
Vaandrager \cite{Vaandrager24} showed how the notion of apartness can be used in the generation of a $m$-complete test set.
We found, however, that the notion of apartness is too strong for the scenario of interest in this paper (see Section~\ref{section:apartness}).

In this paper we weaken apartness in a way that allows us to reason about FSM-based testing when conformance is based on a metric $m$ and threshold $t$.
We start by specifying a notion of conformance defined in terms of a metric $m$ and threshold, which is reminiscent of the work on metric bisimulation \cite{BreugelW01,BreugelHMW05,DESHARNAIS2002163} and conformance testing \cite{AbbasMF14,DeshmukhMP17}.
We then define the notion of strongly separable, which is strictly weaker than apartness and stronger than separability, and discuss its relevance to the proposed conformance relation.
We consider the case where the states of the specification FSM $M$ are pairwise strongly separable.
We give a sufficient condition for a test suite to be $m$-complete, defined in terms of strong separability.
We adapt the  HSI-method to the scenarios we consider and show that the above sufficient condition implies that this version of the HSI-method returns $m$-complete test suites.
This
completeness result extends to the W-method and the Wp-method.
Note that these are exactly the FSM-based test generation techniques considered by Vaandrager~\cite{Vaandrager24}.

To illustrate the concepts throughout the paper and their potential applicability, we use the following hybrid systems model. The same example can be extended to cover stochastic aspects, but for the sake of readability we decided to use a minimal running example. 

\begin{figure}
    \centering
    \includegraphics[width=0.6\linewidth]{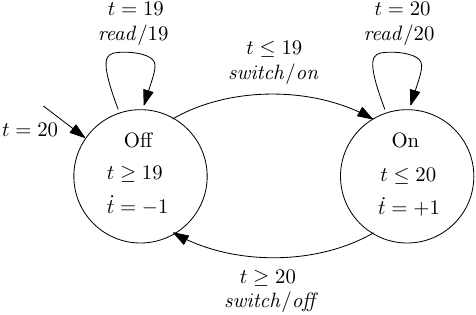}
    \caption{Continuous Thermostat System}
    \label{fig:thermostat-spec}
\end{figure}

\begin{example}\label{ex:thermostat}
Consider a thermostat system that senses the temperatures and uses it to switch a heater on/off and control the room temperature.
A diagrammatic representation of the thermostat behaviour is depicted in Figure \ref{fig:thermostat-spec}. The system starts off at time zero and the room temperature at 20 degree Celsius (in the discrete state Off) and  the temperature decreases linearly with a first derivative -1. When the temperature reaches 19 degrees Celsius, the display can be  updated through a ``read'' action (which outputs 19) and the heater is switched on through the action ``switch'',  which outputs ``on''.  The behaviour in the state labelled ``On'' is similar, with the main difference being that the temperature increases linearly with the first derivative equal to 1. 

Consider an implementation in which the initial condition and the guards differ by 0.5 degree.   We will compare such an implementation (as well as another  faulty one)  against  the specification  based on a notion of conformance  in the remainder of the paper.
The idea is to use an approximate notion of conformance that allows for an error margin and yet, produce test-cases that are complete, i.e., test cases that will fail on any non-conforming implementation with at most $m$ states. 
Consider, for example, a notion of conformance that has a threshold of 0.5 degrees Celsius for temperature differences (and no difference in the switching outputs) and thus, it accepts the implementation that differs only 0.5 degree Celsius in temperature with the specification in Figure \ref{fig:thermostat-spec}, while rejecting those with temperature differences that exceed the threshold or have incorrect switching behaviour.
In the rest of this paper we show how complete test suites can be produced for this specification.
\end{example}

The paper makes the following main contributions.

\begin{itemize}
\item We show that there are scenarios (classes of system, with a corresponding definition of conformance) in which the notion of apartness is too strong;
\item We define the notion of strong separability and demonstrate that this is applicable in the identified scenarios;
\item 
We give a sufficient condition for a test suite to be $m$-complete and show that test suites produced by the well-known Harmonized State Identifiers (HSI) method (as well as similar ones such as the Wp and W-methods) satisfy this condition as long as the specification is an FSM in which the states are pairwise strongly separable and, for state identification, we use sequences that strongly separate states; and
\item 
We illustrate all concepts through our running example and use this to show that the W, Wp and HSI methods are not complete if we do not adapt them (ie if we use separability rather than strong separability).
\end{itemize}

The paper is structured as follows.
We start, in Section~\ref{section:FSM}, by providing preliminary definitions.
In Section~\ref{section:apartness}
we extend classical FSM concepts and definitions to the scenarios of interest.
Within this we define separability and apartness and show that there are classes of system, including the Thermostat example, where separability is not an apartness relation.
This leads to us defining the notion of strong separability.
In Section~\ref{section:test} we give a sufficient condition, defined in terms of strong separability, for a test suite to be $m$-complete, with a consequence being that the well-known W, Wp, and HSI methods return $m$-complete test suites for the notion of conformance and class of system considered if we base the choice of state identifiers on strong separability rather than separability.
We also show that separability is insufficient for these test generation techniques: they can return test suites that are incomplete.
Finally, in Section~\ref{section:conclusions}, we draw conclusions and describe possible lines of future work.


\section{Preliminaries}\label{section:FSM}

In this section we define what we mean by a Finite State Machine and provide classical definitions.
In the next section we adapt this to the context of interest where, for example, outputs may be drawn from a metric space.

\begin{definition}\label{def:FSM}
A \emph{Finite State Machine (FSM)} $M$ is defined by a tuple $(S,s_0,X,Y,\delta,\lambda)$ in which:
    $S$ is the finite set of states;
    %
    $s_0 \in S$ is the initial state;
    %
    $X$ is the finite input alphabet;
    %
    $Y$ is the output alphabet;
    %
    $\delta: S \times X \rightarrow S$ is the transition function; and
    %
    $\lambda: S \times X \rightarrow Y$ is the output function.
\end{definition}

The transition function and the output function are lifted inductively to  sequences of inputs, by defining $\delta(s_i, \varepsilon) = s_i$, $\delta(s_i, x'.\bar{x}) = \delta(\delta(s_i, x'), \bar{x})$ and  $\lambda(s_i, \varepsilon) = \varepsilon$, $\lambda(s_i, x'.\bar{x}) = \lambda(s_i, x'). \lambda(\delta(s_i, x'), \bar{x})$, where $\varepsilon$ is the empty sequence, $x' \in X$, $\bar{x} \in X^*$. 
These FSMs are deterministic since for each state and input there is only one possible next state and output.
In addition,
the FSMs are completely-specified:
$\delta$ and $\lambda$ are total functions.
In this paper we restrict attention to deterministic and completely-specified FSMs, leaving other classes of FSM as a problem for future work.

In order to reason about testing and test effectiveness, it is normal to assume that the SUT behaves like an unknown model that can be expressed in the same formalism~\cite{gaudel95}.

\begin{assumption}
The behaviour of the SUT can be represented by an FSM $M_I = (Q,q_0,X,Y,\delta_I,\lambda_I)$ with the same input and output alphabets as the specification FSM $M$.
\end{assumption}

We don't assume that the FSM $M_I$ is known, only that there is some such $M_I$.

Note that the Definition~\ref{def:FSM} relaxes the standard definition of an FSM, which requires that the output alphabet $Y$ is finite.
This may not seem to be an important point since the set of outputs that can be produced by $M$ is finite since $M$ is deterministic and has finitely many states and inputs.
However, allowing $Y$ to be infinite has an impact on the set of potential SUTs being tested, which becomes infinite (even if we bound the number of states), and so the set of possible faults.

\begin{figure}
    \centering
    \includegraphics[width=\linewidth]{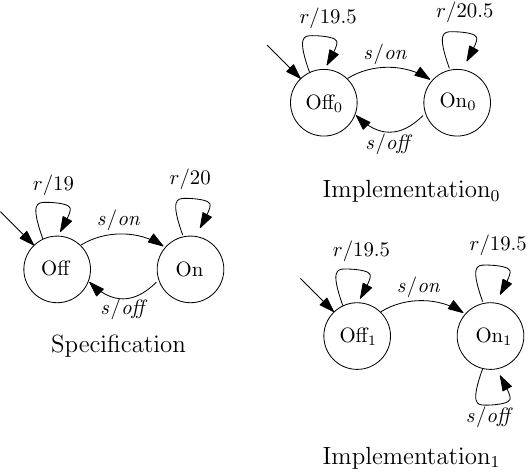}
    \caption{Discretised Thermostat System: A Specification and Two Implementations}
    \label{fig:thermostat-spec-discretised}
\end{figure}

\begin{example}\label{ex:thermostat-discretised}
Consider our thermostat in Example \ref{ex:thermostat} (Figure \ref{fig:thermostat-spec}); assume that we would like to focus on the input/output behaviour and abstract from the internal continuous dynamics of the states. The FSM arising from the specification in Figure \ref{fig:thermostat-spec} is depicted in Figure \ref{fig:thermostat-spec-discretised}. 
The FSM (as well as the two implementations depicted in the same figure), have  the  set of input symbols $\{r, s\}$, representing reading the temperature and switching the heater, respectively. The output set is $\Real \cup \{on, off\}$, i.e.,  the (uncountable) set of real numbers for temperature and the status values $\mathit{on}$ and $\mathit{off}$. 
In our example, the relaxation to allow for infinite outputs is essential here to represent the infinite set of possible faulty implementations with deviating outputs, while a finite set of inputs would be sufficient to represent any discretisation of the passing of time. 
\end{example}

We use $\faultdomain$ to denote the set of FSMs with input alphabet $X$ and output alphabet $Y$.
Further, given integer $m$ we use $\faultdomainbounded{m}$ to denote the set of FSMs with input alphabet $X$, output alphabet $Y$, and at most $m$ states.
As noted, since $Y$ need not be finite, the set $\faultdomainbounded{m}$ might be infinite.
The following shows that if $Y$ is finite then so is $\faultdomainbounded{m}$;
from $\faultdomainbounded{m}$ being finite it is not difficult to show that there must be a finite $m$-complete test suite.

\begin{proposition}
Let us suppose that $M = (Q,q_0,X,Y,\delta,\lambda)$ is an FSM and $Y$ is finite.
Given integer $m$, we have that $\faultdomainbounded{m}$ is finite.
\end{proposition}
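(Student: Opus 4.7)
The plan is a straightforward counting argument. The key observation is that once we fix the input alphabet $X$, the output alphabet $Y$, and a bound $m$ on the number of states, an FSM is essentially determined by a small finite amount of data: a state set, an initial state, a transition function, and an output function. Since we are only concerned with the behaviour of FSMs (or equivalently, we work up to a canonical choice of state names, say representatives from $\{1,\dots,m\}$), it suffices to count tuples of this form.

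First I would fix a number $k \leq m$ of states and pick the canonical state set $S_k = \{1,\dots,k\}$, so that any FSM in $\faultdomainbounded{m}$ with $k$ states is (up to renaming) of the form $(S_k, s_0, X, Y, \delta, \lambda)$. Then I would count:
\begin{itemize}
\item There are $k$ choices for the initial state $s_0 \in S_k$.
\item There are $k^{k \cdot |X|}$ choices for the transition function $\delta : S_k \times X \to S_k$, since $|X|$ is finite by definition of an FSM and $|S_k \times X| = k \cdot |X|$.
\item There are $|Y|^{k \cdot |X|}$ choices for the output function $\lambda : S_k \times X \to Y$, which is finite precisely because we have assumed $Y$ is finite.
\end{itemize}
Hence the number of FSMs with exactly $k$ states (with the canonical state set) is at most $k \cdot k^{k|X|} \cdot |Y|^{k|X|}$, which is a finite number.

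Summing this bound over $k = 1, 2, \dots, m$ gives a finite total, so $\faultdomainbounded{m}$ is finite. The only subtlety worth flagging is the implicit identification of FSMs up to renaming of states: without this, one could trivially generate a proper class of FSMs by varying the underlying carrier set, so the proposition must be read as counting FSMs up to isomorphism (or equivalently, restricting to a canonical state-naming convention). This is standard in the FSM-testing literature and is the only conceptual issue; the rest is a routine product of finite quantities.
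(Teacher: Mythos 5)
Your proof is correct and follows essentially the same route as the paper's: the paper counts finitely many transition structures (directed graphs of bounded outdegree) and finitely many input/output labelings of their edges, which is exactly your explicit count of transition and output functions over a canonical state set. Your remark that the count must be read up to renaming of states is a valid clarification of a point the paper leaves implicit, but it does not change the argument.
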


\begin{proof}
First, observe that each FSM in $\faultdomainbounded{m}$ can be seen as being defined by a directed graph that shows the transitions and a mapping from each edge of the directed graph to a corresponding input $x \in X$ and output $y \in Y$ pair.
In the directed graph, each vertex has outdegree $|I|$ and there are only finitely many such directed graphs.
In addition, for each such directed graph there are finitely many mappings from edges to input/output pairs.
The result therefore holds.
\end{proof}

There is a correspondence between FSMs and states of FSMs:
if $M$ is an FSM and $s_i$ is a state of $M$ then we can also see $s_i$ as being an FSM:
the FSM formed by making $s_i$ the initial state of $M$.
We can thus use $\faultdomain$ to denote the set of possible states of the unknown FSM $M_I$ that represents the SUT.


In traditional FSM-based testing, 
testing typically involves applying an input sequence to the SUT and checking that the output sequence produced is that specified.
We can formalise this in terms of separating states and FSMs; we use the term d-separate in this paper in order to distinguish between the classical notion and that required for the scenarios of interest (defined in the next section).

\begin{definition}
Given FSM $M$ and states $s_1, s_2 \in S$,
an input sequence $\bar{x}$ \emph{d-separates} $s_1$ and $s_2$ if and only if
$\lambda(s_1,\bar{x}) \neq \lambda(s_2,\bar{x})$.
If $\bar{x}$ \emph{d-separates} $s_1$ and $s_2$
then we can write  $s_1 \dseparates_{\bar{x}} s_2$
and say that $s_1$ and $s_2$ are d-separable.
If $\bar{x}$ does not d-separate $s_1$ and $s_2$ then we write
$s_1 \notdseparates_{\bar{x}} s_2$.
Similarly, given a set $W$ of input sequence,
we write 
$s_1 \dseparates_{W} s_2$
if some input sequence in $W$ d-separates $s_1$ and $s_2$ and otherwise we write
$s_1 \notdseparates_{W} s_2$.
\end{definition}

It is straightforward to see that $\notdseparates_{\bar{x}}$ is an equivalence relation.
In addition, it can be applied when comparing two FSMs $M$ and $M_I$ since one can define an FSM that is the disjoint union of $M$ and $M_I$.

We get a corresponding notion of conformance, used in traditional FSM-based testing,
which we call d-conformance.

\begin{definition}
Given FSMs $M= (S,s_0,X,Y,\delta,\lambda)$ and $M_I= (Q,q_0,X,Y,\delta_I,\lambda_I)$,
we have that $M_I$ \emph{d-conforms} to $M$
if and only if for all $\bar{x} \in X^*$
we have that
$M_I \notdseparates_{\bar{x}} M$.
\end{definition}


\section{Conformance, Apartness and Strong Separation}\label{section:apartness}

In this section we start by explaining how the notions of separability and conformance can be naturally translated to a range of scenarios in which outputs could be drawn from
a metric space.
We then define strong separability.

\subsection{Separability and Conformance}

In the scenarios of interest 
an output might, for example, be the speed of a vehicle as measured by a sensor, which is an estimate of the actual speed.
An output in $Y$ could also be a probability distribution;
in testing we then apply a test case multiple times, observe a set of values drawn from an unknown distribution, and use statistical hypothesis testing techniques.
In such situations, we need some way of comparing the observations made with the expected output and we formalise this as a similarity relation $\sim$ on outputs.
As previously explained, such a notion of similarity will typically be  defined in terms of a metric $m$ and a threshold $t$  (below, we use the running example to illustrate this).
Note that $\sim$ normally will not be transitive and therefore need not be an equivalence relation.

The following lifts $\sim$ to  output sequences in the natural way (we use $\varepsilon$ to denote the empty sequence):

\begin{definition}
Given similarity relation $\sim$ on $Y$, $y, y' \in Y$, and $\bar{y}, \bar{y}' \in Y^*$
we have that:
\begin{itemize}
    \item $\varepsilon \sim \varepsilon$;
    \item $y \bar{y} \sim y' \bar{y}'$ if and only if $y \sim y'$ and 
    $\bar{y} \sim \bar{y}'$;
\end{itemize}
\end{definition}

We can define conformance in terms of $\sim$ as follows. 

\begin{definition}
Given FSMs $M= (S,s_0,X,Y,\delta,\lambda)$ and $M_I= (Q,q_0,X,Y,\delta_I,\lambda_I)$,
we define that $M_I$ \emph{conforms} to $M$
if and only if for all $\bar{x} \in X^*$
we have that
$\lambda(s_0,\bar{x}) \sim \lambda_I(q_0,\bar{x})$.
\end{definition}

\begin{example}\label{ex:similarity}
Consider the FSMs given in Figure \ref{fig:thermostat-spec-discretised}; we
define the distance between real-valued outputs to be the absolute difference between them; the distance between $\mathit{on}$ and $\mathit{off}$ (as well as the distance between real numbers on one hand, and status values) is defined to be $\infty$. 
Define two outputs $o, o' \in \Real$ to be $\epsilon$-conforming, denoted by $t \sim_\epsilon t'$, when $\absolute{o - o'} \leq \epsilon$. This leads to a natural notion of conformance, inspired by the hybrid conformance literature \cite{AbbasMF14}, on FSMs. 

According to this notion of conformance, $\mathit{Specification}$ $\sim_{0.5}$ $\mathit{Implementation}_0$.
This conformance  relation holds, because for all  input sequences,  the difference between the respective outputs is bounded by  $0.5$.

Consider the specification
$\mathit{Specification}$ and implementation $\mathit{Implementation}_1$ in
Figure \ref{fig:thermostat-spec-discretised}.
It does \emph{not} hold that 
$\mathit{Specification}$ $\sim_{0.5}$ $\mathit{Implementation}_1$. In fact, 
$\mathit{Implementation}_1$
does not $\epsilon$-conform to $\mathit{Specification}$ for any value of $\epsilon$, witnessed by the output to the input sequence $\mathit{s}. \mathit{s}. \mathit{s}$, in which the difference between the last output in the two systems is $\infty$. 
\end{example}

We can also define the notion of separating two states for the scenario of interest in which $Y$ need not be finite and there is a similarity relation on outputs.

\begin{definition}
Given FSM $M$ and states $s_1, s_2 \in S$,
an input sequence $\bar{x}$ \emph{separates} $s_1$ and $s_2$ if and only if
$\lambda(s_1,\bar{x}) \not \sim \lambda(s_2,\bar{x})$.
If $\bar{x}$ \emph{separates} $s_1$ and $s_2$
then we can write  $s_1 \separates_{\bar{x}} s_2$.
If $\bar{x}$ does not separate $s_1$ and $s_2$ then we write
$s_1 \notseparates_{\bar{x}} s_2$.
Similarly, given a set $W$ of input sequences,
we write 
$s_1 \separates_{W} s_2$
if there is a input sequence in $W$ that separates $s_1$ and $s_2$ and otherwise we write
$s_1 \notseparates_{W} s_2$.
\end{definition}

Where the input sequence of interest is clear,
we often use $\separates$ and $\notseparates$, avoiding reference to the actual input sequences used to separate states.

Conformance and separability are related.

\begin{proposition}\label{prop:conformance-separability}
Given FSMs $M= (S,s_0,X,Y,\delta,\lambda)$ and $M_I= (Q,q_0,X,Y,\delta_I,\lambda_I)$,
$M_I$ conforms to $M$
if and only if
no input sequence separates $s_0$ and $q_0$.
\end{proposition}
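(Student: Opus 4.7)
The plan is to observe that both sides of the biconditional are universally quantified statements over $\bar{x} \in X^*$, and that their quantifier-free bodies are logical negations of one another. Hence the whole equivalence reduces to the trivial identity $(\forall \bar{x}.\, P(\bar{x})) \Leftrightarrow \neg(\exists \bar{x}.\, \neg P(\bar{x}))$, so the proof will essentially be an unfolding of definitions in both directions.

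First I would unfold the left-hand side: by the definition of conformance, $M_I$ conforms to $M$ means that for every $\bar{x} \in X^*$, $\lambda(s_0,\bar{x}) \sim \lambda_I(q_0,\bar{x})$. Next I would unfold the right-hand side: ``no input sequence separates $s_0$ and $q_0$'' means that there is no $\bar{x} \in X^*$ for which $\lambda(s_0,\bar{x}) \not\sim \lambda_I(q_0,\bar{x})$, i.e., that for every $\bar{x}$ we have $\lambda(s_0,\bar{x}) \sim \lambda_I(q_0,\bar{x})$. Since the two unfolded statements are literally the same, both implications of the biconditional follow immediately. For the ``only if'' direction I would take an arbitrary $\bar{x}$, note by conformance that $\lambda(s_0,\bar{x}) \sim \lambda_I(q_0,\bar{x})$, and conclude that $\bar{x}$ does not separate $s_0$ and $q_0$. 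For the ``if'' direction I would similarly start from an arbitrary $\bar{x}$ and observe that, since $\bar{x}$ does not separate $s_0$ and $q_0$, we must have $\lambda(s_0,\bar{x}) \sim \lambda_I(q_0,\bar{x})$, which gives conformance.

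The only point that needs a brief justification is that separation is defined with respect to two states of a single FSM $M$, whereas $s_0$ belongs to $M$ and $q_0$ belongs to $M_I$. I would handle this exactly as the paper already suggests in the paragraph following the definition of d-separability: view $s_0$ and $q_0$ as states of the disjoint union of $M$ and $M_I$, so that the definition of $\separates_{\bar{x}}$ specialises to $\lambda(s_0,\bar{x}) \not\sim \lambda_I(q_0,\bar{x})$. This is the only subtle spot in the argument and it has already been addressed by construction, so I do not anticipate any real obstacle; the proof is essentially a two-line chain of equivalences once the disjoint-union convention is invoked.
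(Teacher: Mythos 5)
Your proposal is correct and matches the paper's intent: the paper states this proposition without proof precisely because, as you show, both sides unfold to the same universally quantified similarity condition $\lambda(s_0,\bar{x}) \sim \lambda_I(q_0,\bar{x})$ for all $\bar{x} \in X^*$. Your handling of the cross-machine separation via the disjoint-union convention is exactly the device the paper itself invokes after the definition of d-separability, so there is nothing to add.
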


When two FSMs, $M$ and $M_I$ do not conform to each other, we denote this by  $M \separates M_I$; following Proposition \ref{prop:conformance-separability}, this can only happen if there exists $\bar{x} \in X^*$ such that the initial states of $M$ and $M_I$ are separated, which we denote by $M \separates_{\bar{x}} M_I$.

\begin{example}\label{ex:conformance}
Consider again the FSMs given in Figure \ref{fig:thermostat-spec-discretised} and the notion of similarity $\sim_{0.5}$; we have that the states $\mathit{Off}$ in  $\mathit{Specification}$ and $\mathit{Off}_0$ in  $\mathit{Implementation}_0$ are not separable by any input sequence and the two FSMs conform to each other with respect to $\sim_{0.5}$. 

However, $\mathit{Specification}$ and  $\mathit{Implementation}_1$ do not conform to each other. The input sequence $s.s.s$ separates the states $\mathit{Off}$ and $\mathit{Off}_1$, because  the former  produces the output sequence $\mathit{on}$.$\mathit{off}$.$\mathit{on}$, while  the later  produces  $\mathit{on}$.$\mathit{off}$.$\mathit{off}$ and the two output sequences have a distance of $\infty$. 
\end{example}


In order to reason about test effectiveness, we will use the notion of the \emph{product machine} defined by a specification FSM $M$ and an unknown FSM $M_I$ that models the SUT.

\begin{definition}
Given FSMs $M = (S,s_0,X,Y,\delta,\lambda)$ and $M_I = (Q,q_0,X,Y,\delta_I,\lambda_I)$,
the product machine $P(M,M_I)$ is the FSM $(S \times Q \cup \{e\}, (s_0,q_0),X,Y \cup \{err\},\delta_P,\lambda_P)$ for $e \notin S \times Q$ and $err \notin Y$
such that for every state $(s_i,q_i) \in S \times Q$ of $P(M,M_I)$ and input $x \in X$ we have that
\begin{itemize}
    \item If $\lambda(s_i,x) \sim \lambda_I(q_i,x)$ then
    $\delta_P((s_i,q_i),x) = (\delta(s_i,x),\delta_I(q_i,x))$ and $\lambda_P((s_i,q_i),x)) = \lambda_I(q_i,x)$; 
    \item else $\delta_P((s_i,q_i),x) = e$
    and $\lambda_P((s_i,q_i),x)) = err$.
\end{itemize}
Further, for all $x \in X$ we have that
$\delta_P(e,x) = e$
and $\lambda_P(e,x) = err$.   
\end{definition}

The following shows how the product machine relates to finding faults in testing.

\begin{proposition}\label{prop_error_state}
Given FSMs $M$ and $M_I$,
an input sequence $\bar{x}$ separates $M$ and $M_I$ if and only if $\bar{x}$ takes the product machine $P(M,M_I)$ to the error state.
\end{proposition}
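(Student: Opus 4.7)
The plan is to prove both directions of the biconditional simultaneously by induction on the length of the input sequence $\bar{x}$, exploiting the inductive definition of $\sim$ on output sequences together with the case split that is built into the definition of $P(M,M_I)$. The base case $\bar{x} = \varepsilon$ is immediate: both sides are false, since $\varepsilon \sim \varepsilon$ so no separation occurs, and $\delta_P((s_0,q_0),\varepsilon) = (s_0,q_0) \neq e$.

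For the inductive step, I would write $\bar{x} = x \cdot \bar{x}'$ and distinguish the two cases highlighted by the definition of $P(M,M_I)$ on the first symbol. If $\lambda(s_0,x) \not\sim \lambda_I(q_0,x)$, then a single application of $x$ drives $P(M,M_I)$ to $e$, where it remains for any remaining suffix $\bar{x}'$; correspondingly, the output sequences $\lambda(s_0,\bar{x})$ and $\lambda_I(q_0,\bar{x})$ already disagree on their leading letters under $\sim$ and therefore fail to be similar as sequences, so both sides of the biconditional hold. If instead $\lambda(s_0,x) \sim \lambda_I(q_0,x)$, then $P(M,M_I)$ steps to $(\delta(s_0,x),\delta_I(q_0,x))$; by the recursive clause for $\sim$ on sequences, $\bar{x}$ separates $M$ and $M_I$ iff $\bar{x}'$ separates the FSMs obtained from $M$ and $M_I$ by taking the respective $x$-successor states as initial states. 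Applying the induction hypothesis to these ``shifted'' FSMs — and noting that from $(\delta(s_0,x),\delta_I(q_0,x))$ onwards the product machine coincides with the product of the shifted FSMs — closes the case.

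I do not anticipate a genuine obstacle: the product construction has been designed precisely so that the first position at which the output sequences fail to be similar is the step that drives the machine into the absorbing state $e$, and the absorbing nature of $e$ then handles everything that follows $\bar{x}$. The only point deserving minor care is keeping the indexing of output-sequence positions aligned with the step counter of the product machine, which is handled cleanly by the one-letter-at-a-time unfolding sketched above, together with an explicit invocation of the fact that $\delta_P(e,x') = e$ for every $x' \in X$.
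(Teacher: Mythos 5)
Your proof is correct, and it takes a genuinely different route from the paper's. Both arguments are inductions on the length of $\bar{x}$, but the paper keeps $M$ and $M_I$ (and hence the product machine) fixed and peels off the \emph{last} input symbol, writing $\bar{x} = \bar{x}_1 x$, applying the inductive hypothesis to the proper prefix $\bar{x}_1$, and doing a separate case analysis on whether some proper prefix already separates the two machines. You instead peel off the \emph{first} symbol and re-root the claim at the successor states $\delta(s_0,x)$ and $\delta_I(q_0,x)$, which requires the inductive statement to be quantified over all pairs of (initial states of) FSMs; you acknowledge this implicitly by invoking the hypothesis on the ``shifted'' machines and by observing that the product of the shifted machines coincides with $P(M,M_I)$ restarted at $(\delta(s_0,x),\delta_I(q_0,x))$ --- a small but necessary remark that you do make. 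Your head-first decomposition buys a cleaner argument: it aligns exactly with the head-recursive definitions of the lifted $\lambda$ and of $\sim$ on output sequences, so the two cases of the product construction ($\lambda(s_0,x) \sim \lambda_I(q_0,x)$ or not) directly settle both directions of the biconditional at once, with the absorbing state $e$ disposing of the remaining suffix; the paper's last-symbol version avoids generalizing over initial states but pays for it with the extra prefix case analysis and separate treatment of the two implications. The only point to make fully explicit when writing this up is the strengthened induction statement (``for all states $s$ of $M$ and $q$ of $M_I$, $\bar{x}$ separates $s$ and $q$ iff $\bar{x}$ takes $P(M,M_I)$ from $(s,q)$ to $e$''), since the proposition as stated fixes the initial states.
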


\begin{proof}
We will use proof by induction on the length of the input sequences considered.
The result holds immediately for the base case, which is the empty sequence,
since the empty sequence cannot separate two FSMs and also cannot take a product machine to the error state.

Inductive hypothesis:
the result holds for all input sequences of length less than $k$ ($k > 0$).
We therefore assume that $\bar{x}$ has length $k$ and are required to prove that
$M \separates_{\bar{x}} M_I$
if and only if $\bar{x}$ takes $P(M,M_I)$
to the error state.
We consider the two directions:

\begin{enumerate}
\item Assume that $M \separates_{\bar{x}} M_I$
and we are required to prove that $\bar{x}$ takes $P(M,M_I)$
to the error state.

If there is a proper prefix $\bar{x}_1$ of $\bar{x}$ such that $M \separates_{\bar{x}_1} M_I$ then we can apply the inductive hypothesis to $\bar{x}_1$ and we obtain that
$\bar{x}_1$ takes $P(M,M_I)$
to the error state.
Thus, $\bar{x}$ takes $P(M,M_I)$
to the error state as required.
We will therefore assume that no proper prefix of $\bar{x}$ separates $M$ and $M_I$.

Since no proper prefix of $\bar{x}$ separates $M$ and $M_I$,
there is an input $x$ and input sequence $\bar{x}_1$ such that
$\bar{x} = \bar{x}_1 x$ and $M \notseparates_{\bar{x}_1} M_I$.
By the inductive hypothesis,
$\bar{x}_1$ does not take the product machine to the error state.

Let $(s_i,q_i)$ denote the state of the Product Machine reached by $\bar{x}_1$.
Since
$M \separates_{\bar{x}} M_I$
and
$M \notseparates_{\bar{x}_1} M_I$
we must have that $x$ separates states $s_i$ and $q_i$.
But, by the definition of the Product Machine,
this means that the transition with starting state $(s_i,q_i)$ and input $x$ takes the Product Machine to the error state as required.

\item Assume that $M \notseparates_{\bar{x}} M_I$
and we are required to prove that  $\bar{x}$ does not take $P(M,M_I)$ to the error state.
Since $\bar{x}$ is non-empty, $\bar{x} = \bar{x}_1 x$ for some input $x$ and input sequence $\bar{x}_1$.
Further, $M \notseparates_{\bar{x}_1} M_I$
and so, by the inductive hypothesis,
$\bar{x}_1$ does not take $P(M,M_I)$ to the error state.
Let us suppose that
$\bar{x}_1$ takes $P(M,M_I)$ to state $(s_i,q_i)$.
But,
since $M \notseparates_{\bar{x}} M_I$,
by definition we have that
$s_i \notseparates_{x} q_i$.
Thus, by the definition of the Product Machine,
there is a transition from $(s_i,q_i)$ to
$(\delta(s_i,a),\delta_I(q_i,x))$ with input $x$.
Thus,
$\bar{x}$ does not take $P(M,M_I)$ to the error state as required.
\end{enumerate}
\end{proof}

The following is immediate from the above.

\begin{proposition}
Given FSMs $M$ and $M_I$,
$M_I$ conforms to $M$ if and only if
the error state of $P(M,M_I)$
is not reachable.
\end{proposition}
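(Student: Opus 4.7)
The plan is to obtain this as a direct corollary of the two preceding propositions, with essentially no additional work beyond chaining equivalences.

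First I would unfold the definition of reachability: the error state $e$ of $P(M,M_I)$ is reachable if and only if there exists some input sequence $\bar{x} \in X^*$ such that $\bar{x}$ takes $P(M,M_I)$ to $e$. Contrapositively, $e$ is unreachable iff no input sequence takes the product machine to $e$. Then I would apply Proposition~\ref{prop_error_state}, which says that $\bar{x}$ takes $P(M,M_I)$ to the error state iff $\bar{x}$ separates $M$ and $M_I$. Quantifying over all $\bar{x}$, this gives: $e$ is unreachable iff no input sequence separates $M$ and $M_I$, i.e.\ iff $M \notseparates_{\bar{x}} M_I$ for all $\bar{x} \in X^*$.

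Next I would invoke Proposition~\ref{prop:conformance-separability}, which states that $M_I$ conforms to $M$ iff no input sequence separates $s_0$ and $q_0$. Since separation between $M$ and $M_I$ is defined precisely as separation of their initial states (via the disjoint union construction discussed after d-separability, and rephrased above Proposition~\ref{prop:conformance-separability}), the two conditions coincide. Chaining the equivalences yields the claim.

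Since each step is a one-line appeal to an earlier result, there is no real obstacle; the only thing to be careful about is making the quantifier flip explicit (``reachable'' is existential over input sequences, whereas ``conforms'' and the hypothesis of Proposition~\ref{prop_error_state} are naturally phrased per-sequence), and ensuring that the reader sees that ``$\bar{x}$ separates $M$ and $M_I$'' and ``$\bar{x}$ separates the initial states $s_0$ and $q_0$'' are the same statement under the conventions set earlier in the section.
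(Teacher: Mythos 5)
Your proposal is correct and matches the paper's approach: the paper simply notes the result is immediate from Proposition~\ref{prop_error_state} together with the characterisation of conformance via separability of initial states, which is exactly the chain of equivalences you spell out. Making the quantifier flip explicit is a reasonable (if minor) elaboration of what the paper leaves implicit.
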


\begin{figure}
    \centering
    \includegraphics[width=\linewidth]{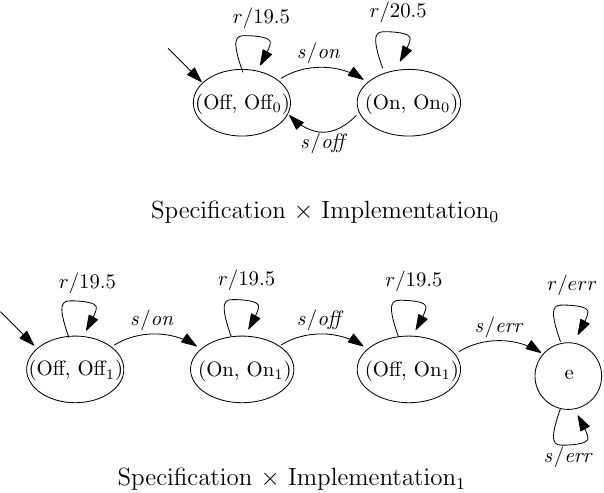}
    \caption{Thermostat System: Product Machines of Conforming and Non-Conforming Implementations}
    \label{fig:thermostat-product}
\end{figure}

\begin{example}
Consider the two implementations analysed for conformance in Example \ref{ex:similarity}. Figure \ref{ex:similarity} depicts the product automaton of the two implementation. As it 
can be observed in the figure, the product machine concerning $\mathit{Implementation}_0$, does not feature any reachable error state, because all transitions can be matched up to the error margin (of 0.5). However, for  $\mathit{Implementation}_1$, due to the dissimilar behaviour observed after the third switch transition, the product machine will move to an error state.     
\end{example}

\subsection{Apartness and Strong Separability}

Previous work has used the notion of an apartness relation and shown how one can use this to reason about test effectiveness when there is a finite set of outputs and conformance is defined in terms of equality of outputs rather than similarity.

\begin{definition}
A binary relation $\apart$ on a set $Z$ is an \emph{apartness relation}
if and only if it satisfies the following properties
\begin{itemize}
    \item It is irreflexive: for all $x \in Z$ we have that $x \apart x$ does not hold;
    \item It is symmetric: for all $x,y \in Z$, if $x \apart y$ then $y \apart x$; and
    \item It is co-transitive: for all $x, y, z \in Z$, if $x \apart y$ then either $x \apart z$ or $y \apart z$.
\end{itemize}
\end{definition}

Consider the classical notion of separability for FSMs defined in Section~\ref{section:FSM} (d-separability).
It is immediately clear that d-separability is both irreflexive and symmetric.
Now let us suppose that input sequence $\bar{x}$ d-separates states $s_1$ and $s_2$ of $M$ and so $\bar{x}$ leads to different output sequences $\bar{y}_1$ and $\bar{y}_2$ when applied in $s_1$ and $s_2$.
Further, let us suppose that we input $\bar{x}$ in some other state $s_i$
and this leads to output sequence $\bar{y}$.
Since
$\bar{x}$ d-separates $s_1$ and $s_2$ we have that
$\bar{y}_1 \neq \bar{y}_2$.
We therefore must have that either 
$\bar{y} \neq \bar{y}_1$ or $\bar{y} \neq \bar{y}_2$ (or both)
and so $\bar{x}$ must d-separate $s_i$ from at least one of $s_1$ and $s_2$.
As a result, d-separability is  co-transitive.
From this it is clear that d-separability is an apartness relation when
testing a deterministic SUT against a completely-specified deterministic FSM $M$,
with identify rather than similarity,
and so the results of Vaandrager \cite{Vaandrager24} can be applied.

Now consider what we mean by separability when we have continuous outputs and similarity is defined in terms of a metric $m$ and a threshold $t$.
Then an input sequence separates states $s_1$ and $s_2$ if the resultant observations $o_1$ and $o_2$ are such that
$m(o_1,o_2) > t$.
In such scenarios, separability is not an apartness relation since it need not be co-transitive, as shown by the following. 

\begin{example}\label{ex:seaparable1}
Consider the FSM thermostat models in Example \ref{ex:thermostat-discretised} (Figure \ref{fig:thermostat-spec-discretised}); there, consider the states $\mathit{Off}$ and $\mathit{On}$ in the  
$\mathit{Specification}$ FSM. 
They are separable using both inputs $r$ and $s$.
Let us consider what happens if we use $r$ to separate these states and we have some state $s_i$ of the SUT.
Then $s_i$ can have $r/19.5$, e.g., the state $\mathit{Off}_0$ in $\mathit{Implementation}_0$, that cannot be separated using only input $r$ from either of the two states.
This shows that separability is not co-transitive and so is not an apartness relation.
\end{example}

Since we cannot use the notion of apartness, we explore how this can be weakened for the scenarios of interest.
We first discuss how d-separability is used in classical FSM-based test techniques.

In most FSM-based test techniques that aim to find state-transfer faults, input sequences that d-separate states of a specification FSM are used to check the current state of the SUT.
To see how this works, let us suppose that  input sequence $\bar{x}$ d-separates states $s_1$ and $s_2$ of the specification FSM
and that the (expected) output sequences produced are $\bar{y}_1$ and $\bar{y}_2$ respectively.
Let us suppose that $\bar{x}$ is applied to the SUT when the SUT is in some an state $s_i \not \in \{s_1,s_2\}$.  
There are three possibilities:

\begin{enumerate}
\item The SUT produces output sequence $\bar{y}_2$ in response to $\bar{x}$.
We thus have that $\bar{x}$ d-separates $s_i$ and $s_1$.
\item The SUT produces output sequence $\bar{y}_1$ in response to $\bar{x}$.
We thus have that $\bar{x}$ d-separates $s_i$ and $s_2$.
\item The SUT produces some other output sequence $\bar{y}_3$ in response to $\bar{x}$.
We thus have that $\bar{x}$ d-separates $s_i$ and from both $s_1$ and $s_2$.
\end{enumerate}

FSM-based test techniques utilise such information that is provided by applying input sequences that d-separate the states of the specification.
However,
in Example~\ref{ex:seaparable1} we saw that
the above does not hold in the scenarios that we consider in this paper:
if two states of the specification are separated by an input sequence $\bar{x}$ then it is possible that a state of the SUT is not separated from either of them by $\bar{x}$.

We now strengthen the notion of separability, to what we call strong separability,
in a manner that ensures that strong separability satisfies the property (of d-separability in traditional FSM-based testing scenarios) described above.

\begin{definition}\label{def:strong_sep}
Given FSM $M$ and states $s_1, s_2 \in S$,
input sequence $\bar{x}$ is a \emph{witness} that 
states $s_1$ and $s_2$, $s_1 \neq s_2$, are \emph{strongly separable},
denoted $s_1 \wapart_{\bar{x}} s_2$,
if and only if for all $s_3 \in \faultdomain$
we have that either
$s_1 \separates_{\bar{x}} s_3$
or
$s_2 \separates_{\bar{x}} s_3$.
We also say that $s_1$ and $s_2$ are \emph{strongly separable} and denote this $s_1 \wapart s_2$.
\end{definition}

\begin{example}\label{ex:seaparable}
Consider the FSM thermostat models in Example \ref{ex:thermostat-discretised} (Figure \ref{fig:thermostat-spec-discretised}).
Recall that  states $\mathit{Off}$ and $\mathit{On}$ in the  
$\mathit{Specification}$ FSM are separable using both inputs $r$ and $s$.
However, we have seen that they are not strongly separable using input $r$. Namely, there can be a state that has $r/19.5$, e.g., the state $\mathit{Off}_0$ in $\mathit{Implementation}_0$, that cannot be separated using only input $r$ from either of the two states. In such cases, one can fix this issue, i.e., obtain strong separability, by either choosing different input sequences (in this case $s$) or decreasing the threshold of similarity, e.g., to $0.1$, so that all possible states in $\faultdomain$ are separable (by $r$) from at least one of these two states. 
In practice, the use of a lower threshold might require changes to the equipment used to observe the SUT in testing or to additional test runs with the same input sequence (so that more precise estimates are produced).
\end{example}

The following proposition states and proves the basic properties of strong separability. 

\begin{proposition}
Given FSM $M$ and input sequence $\bar{x}$, the relation $\wapart_{\bar{x}}$ is symmetric and irreflexive.
\end{proposition}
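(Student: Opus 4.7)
The plan is to observe that both properties reduce to very short syntactic checks once we note that the underlying similarity relation $\sim$ on $Y$ is symmetric (which follows from its usual definition via a metric $m$ and threshold $t$, since $m(y,y') = m(y',y)$) and reflexive (since $m(y,y) = 0 \leq t$, so $y \sim y$ for all $y \in Y$). Lifting these pointwise to $Y^*$ in the obvious way shows that $\sim$ on output sequences is also symmetric and reflexive, and hence separability $\separates_{\bar{x}}$ on states is symmetric and irreflexive.

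For symmetry of $\wapart_{\bar{x}}$, I would assume $s_1 \wapart_{\bar{x}} s_2$ and show $s_2 \wapart_{\bar{x}} s_1$. The first condition $s_1 \ne s_2$ is trivially equivalent to $s_2 \ne s_1$. For the universal clause, fix an arbitrary $s_3 \in \faultdomain$. By hypothesis, either $s_1 \separates_{\bar{x}} s_3$ or $s_2 \separates_{\bar{x}} s_3$; since disjunction is commutative this is the same as saying $s_2 \separates_{\bar{x}} s_3$ or $s_1 \separates_{\bar{x}} s_3$, which is exactly what is required for $s_2 \wapart_{\bar{x}} s_1$. No symmetry of $\separates_{\bar{x}}$ is even needed here, since the two states are already used symmetrically inside the quantifier.

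For irreflexivity, the quickest route is to note that Definition~\ref{def:strong_sep} requires $s_1 \ne s_2$, so $s_1 \wapart_{\bar{x}} s_1$ fails outright. For completeness I would also point out that the universal clause alone would forbid reflexivity: instantiating $s_3 := s_1$ would require $s_1 \separates_{\bar{x}} s_1$, i.e.\ $\lambda(s_1,\bar{x}) \not\sim \lambda(s_1,\bar{x})$, contradicting reflexivity of $\sim$.

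I do not expect any real obstacle; the statement is essentially a bookkeeping lemma verifying that strong separability inherits two of the three apartness axioms. The only thing worth flagging is that co-transitivity, the third apartness axiom, is precisely what was shown to fail in Example~\ref{ex:seaparable1}, so no attempt should be made to prove it here.
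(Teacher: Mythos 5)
Your proof is correct and follows essentially the same route as the paper: symmetry is read off directly from the symmetric form of Definition~\ref{def:strong_sep}, and irreflexivity is obtained by instantiating $s_3 := s_1$, which would force $s_1 \separates_{\bar{x}} s_1$ and contradict the irreflexivity of $\separates_{\bar{x}}$ (i.e.\ reflexivity of $\sim$). Your extra observations --- that the side condition $s_1 \neq s_2$ in the definition already rules out reflexivity, and that reflexivity of $\sim$ is justified by the metric/threshold construction --- are harmless additions to the same argument.
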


\begin{proof}
The symmerry of relation $\wapart_{\bar{x}}$ is immediate from Definition~\ref{def:strong_sep}.
To prove that $\wapart_{\bar{x}}$ is irreflexive, we use a proof by contradiction and assume that there is an input sequence $\bar{x}$ and state $s_i$ such that $s_i \wapart_{\bar{x}} s_i$.
Choosing $s_3 = s_i$ we then obtain,
from Definition~\ref{def:strong_sep},
that $s_i \separates_{\bar{x}} s_i$.
However, this contradicts $\separates_{\bar{x}}$ being irreflexive and so the result follows.
\end{proof}

Using the following proposition, we can compare strongly separability and separability.

\begin{proposition}
Given FSM $M$, input sequence $\bar{x}$ and states $s_1$ and $s_2$ of $M$ such that $s_1 \wapart_{\bar{x}} s_2$, we have that $\bar{x}$ separates $s_1$ and $s_2$.
\end{proposition}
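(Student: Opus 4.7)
The plan is to unfold Definition~\ref{def:strong_sep} and then pick a convenient witness for the universally quantified state $s_3$. Specifically, since $s_1 \wapart_{\bar{x}} s_2$ means that for every $s_3 \in \faultdomain$ we have either $s_1 \separates_{\bar{x}} s_3$ or $s_2 \separates_{\bar{x}} s_3$, the natural move is to instantiate $s_3$ with $s_1$ itself (using the correspondence, noted earlier in the paper, that a state of an FSM can be viewed as an element of $\faultdomain$ by making it initial).

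With this instantiation, the disjunction becomes $s_1 \separates_{\bar{x}} s_1$ or $s_2 \separates_{\bar{x}} s_1$. The first disjunct fails by the irreflexivity of $\separates_{\bar{x}}$ (the same property invoked in the immediately preceding proof of irreflexivity of $\wapart_{\bar{x}}$, which ultimately traces back to $\sim$ being reflexive on outputs, so $\lambda(s_1,\bar{x}) \sim \lambda(s_1,\bar{x})$). Hence the second disjunct must hold, giving $s_2 \separates_{\bar{x}} s_1$, and by symmetry of $\separates_{\bar{x}}$ we conclude $s_1 \separates_{\bar{x}} s_2$, as required.

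There is no real obstacle here; the proof is essentially a one-line instantiation. The only point that deserves a brief sentence is the justification that $s_1$ (a state of $M$) is a legitimate choice for $s_3 \in \faultdomain$, which is exactly the state-as-FSM identification the paper makes explicit. Everything else is direct appeal to irreflexivity and symmetry of $\separates_{\bar{x}}$.
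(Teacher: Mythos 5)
Your proof is correct and follows essentially the same route as the paper: instantiate the universally quantified state in Definition~\ref{def:strong_sep} with $s_3 = s_1$, rule out $s_1 \separates_{\bar{x}} s_1$ by irreflexivity, and conclude $s_2 \separates_{\bar{x}} s_1$. The extra remarks on symmetry of $\separates_{\bar{x}}$ and on viewing a state of $M$ as an element of $\faultdomain$ are fine clarifications but do not change the argument.
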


\begin{proof}
Choose $s_3 = s_1$.
By the definition of $\wapart_{\bar{x}}$
(Definition~\ref{def:strong_sep}),
we must have that either $s_1 \separates_{\bar{x}} s_1$ or $s_2 \separates_{\bar{x}} s_1$.
Since $\separates_{\bar{x}}$ is irreflexive, we therefore have that $s_2 \separates_{\bar{x}} s_1$ and so the result follows.   
\end{proof}

The following is immediate from the definition of strong separability and corresponds to the property (described above) of d-separability in traditional FSM-based testing.

\begin{proposition}
Given states $s_1$ and $s_2$ of $M$ with $s_1 \wapart_{\bar{x}} s_2$, if $s_3 \in \faultdomain$ then at least one of the following must hold.
\begin{enumerate}
\item $\bar{x}$ separates $s_3$ from $s_1$;
\item $\bar{x}$ separates $s_3$ from $s_2$;
\item $\bar{x}$ separates $s_3$ from both $s_1$ and $s_2$.
\end{enumerate} 
\end{proposition}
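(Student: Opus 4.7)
The plan is to observe that this proposition is essentially a direct unpacking of Definition~\ref{def:strong_sep} combined with a simple case split, so the main work is just in making the case analysis explicit rather than establishing any new technical content.

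First, I would fix an arbitrary $s_3 \in \faultdomain$. Then, applying Definition~\ref{def:strong_sep} to the hypothesis $s_1 \wapart_{\bar{x}} s_2$ with this choice of $s_3$, I immediately obtain that $s_1 \separates_{\bar{x}} s_3$ or $s_2 \separates_{\bar{x}} s_3$ (or both). The three cases listed in the statement are then simply an enumeration of the boolean possibilities: ``only the first holds'' (case 1), ``only the second holds'' (case 2), and ``both hold'' (case 3). Since the definition guarantees that at least one of $s_1 \separates_{\bar{x}} s_3$ and $s_2 \separates_{\bar{x}} s_3$ is true, at least one of the three enumerated cases must apply.

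There is no real obstacle here; the only thing to be careful about is the wording of the statement, which lists ``separates from both'' as a distinct case rather than as the intersection of the first two. I would address this by explicitly noting that cases (1) and (2) of the conclusion are not meant to exclude each other, and that case (3) is precisely the overlap where both separations occur simultaneously. Given this interpretation, the three cases exhaust all the outcomes allowed by the definition, and the proposition follows immediately. Since the proposition is essentially a restatement of the definition, the proof can be written in just a few lines and no induction, algebra, or use of the product-machine construction is required.
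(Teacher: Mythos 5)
Your proof is correct and matches the paper, which states this proposition without a separate proof, noting only that it is immediate from Definition~\ref{def:strong_sep}; your explicit unpacking of the definition (fix $s_3$, instantiate the universal quantifier, note the three listed cases merely enumerate the non-exclusive disjunction) is exactly that argument made explicit.
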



\section{Test Generation}\label{section:test}

In this section, we explore the problem of test generation when,
for every pair of states of specification $M$, we have a specific input sequence that provides a witness that these states are strongly separable.

\begin{assumption}
For every pair $s_1, s_2$ of distinct states of $M$, there is an input sequence $\bar{x}$ that is a witness that $s_1$ and $s_2$ are strongly separable. 
We will use $\witness{s_1}{s_2}$ to represent such a witness
(and so $s_1 \wapart_{\witness{s_1}{s_2}}  s_2$)
and require that
$\witness{s_1}{s_2} = \witness{s_2}{s_1}$.
\end{assumption}

We will retain the notion of a state cover, used in classical FSM-based test generation techniques.
First we define a state cover, which is a 
minimal prefix-closed set of input sequences that, between them, reach all states of $M$.

\begin{definition}
Given a finite state machine $M$ whose states are pairwise strongly separable, a set $V$ of input sequences is  a \emph{state cover} for $M$
if $V$ is prefix closed and for all $s_i \in S$ there is exactly one input sequence $v_i \in V$ such that
$\delta(s_0,v_i) = s_i$.
\end{definition}

\begin{example}\label{ex:state-cover}
Consider the $\mathit{Specification}$ FSM in Example \ref{ex:thermostat-discretised} (Figure \ref{fig:thermostat-spec-discretised}); a state cover set $V$ for $\mathit{Specification}$ is $\{\varepsilon, s\}$. 
\end{example}

If we have a set of test sequences $Z$, we can remove from $Z$ any sequence that is a proper prefix of another sequence in $Z$.
We will let $RP(Z)$ denote the set of input sequences produced by removing from $Z$ any input sequence that is a proper prefix of another input sequence in $Z$.

\begin{definition}
Given a set $Z$ of input sequences,
its irredundant subset,
denoted by $RP(Z)$, is defined as follows:
\[
RP(Z) = \{ \sigma_1 \in Z |  \nexists \sigma_2 \in X^* \setminus \{\varepsilon\} .
\sigma_1.\sigma_2 \in Z
\}
\]
\end{definition}

In the following,
for a state $s_i$ of $M$ we use
$\wset{s_i}$ to denote the set of
input sequences used to show that $s_i$ is strongly separable from other states of $M$.
Recall that for a pair $s_1, s_2$
of distinct states of $M$,
$\witness{s_1}{s_2}$ denotes
a unique input sequence that is a witness for $s_1$ and $s_2$ being apart.

\begin{definition}
Given state $s_1$ of $M$,
we let $\wset{s_1}$ denote the \emph{state identification} set
\[
\wset{s_1} = RP(\{\witness{s_1}{s_2} | s_2 \in S \setminus \{s_1\} \})
\]
\end{definition}

\begin{example}\label{ex:state-identification}
Consider the $\mathit{Specification}$ FSM in Example \ref{ex:thermostat-discretised} (Figure \ref{fig:thermostat-spec-discretised}).
Both $r$ and $s$ pairwise separate the states of $\mathit{Specification}$ but we have seen that only $s$ strongly separates these states:
$\mathit{Off} \wapart_{s} \mathit{On}$
and
$\neg \mathit{Off} \wapart_{r} \mathit{On}$.
Since $\mathit{Off} \wapart_{s} \mathit{On}$, we can choose the state identification set for $\mathit{Off}$ (and symmetrically for $\mathit{On}$) to be $\{s\}$. 
\end{example}

Recall that we do not know the FSM $M_I$ that models the SUT but we will want to reason about the states of this unknown FSM that are met during testing.  
Given a state $s_i$ of $M$, we use $B(s_i)$ to denote the ball (set) of states of $M_I$ that are not separated from $s_i$ by the input sequences in the state identification set for $s_i$.

\begin{definition}\label{def:ball}
Given state $s_i$ of $M$ and FSM $M_I$ with state set $Q$,
$B(s_i) = \{q \in Q| \forall w \in \wset{s_i}. s_i \notseparates_w q \}$.
\end{definition}

\begin{example}\label{ex:ball}
Consider the $\mathit{Specification}$ and $\mathit{Implementation}_0$  FSMs in Example \ref{ex:thermostat-discretised} (Figure \ref{fig:thermostat-spec-discretised}), comprising the set of states $\{\mathit{Off}, \mathit{On}, \mathit{Off}_0, \mathit{On}_0\}$.
We therefore have that $Q = \{\mathit{On}_0, \mathit{Off}_0\}$.
Given the identification set $\{s\}$, we have that
$B(\mathit{Off}) =  \{\mathit{Off}_0\}$
and $B(\mathit{On}) = \{\mathit{On}_0\}$.
Clearly we have that $B(\mathit{On})$ and $B(\mathit{Off})$ are disjoint.
\end{example}

Observe that $B(s_i)$ is defined in terms of separability and \emph{not} strong separability. 
However, the notion of state identifier used in Definition~\ref{def:ball}
is defined in terms of strong separability (see Definition~\ref{def:strong_sep}).
The following shows what can happen if we were to instead use input sequences that separate, but do not strongly separate, the states of the specification.

\begin{example}\label{ex:ball_not}
Consider again the $\mathit{Specification}$ and $\mathit{Implementation}_0$  FSMs in Example \ref{ex:thermostat-discretised} (Figure \ref{fig:thermostat-spec-discretised}), comprising the set of states $\{\mathit{Off}, \mathit{On}, \mathit{Off}_0, \mathit{On}_0\}$,
with $Q = \{\mathit{On}_0, \mathit{Off}_0\}$.
Now consider the input $r$, which separates the states of the specification, and the sets
$B_1 = \{q \in Q| q \sim_{r} \mathit{Off}\}$
and
$B_2 = \{q \in Q| q \sim_{r} \mathit{On}\}$.
We have that
$B_1 = \{ \mathit{Off}_0 \}$
and
$B_2 = \{ \mathit{On}_0, \mathit{Off}_0 \}$.
In this case, $B_1$ and $B_2$ are not disjoint.
\end{example}

The following is an important consequence of strong separability and tells us that if we take two strongly separable states $s_1$ and $s_2$ of $M$ then $B(s_1)$ and $B(s_2)$ are pairwise disjoint (ie the property we observed in Example~\ref{ex:ball} always holds).

\begin{proposition}\label{prop_apart_implies_sep}
Given states $s_1$ and $s_2$ of $M$ with $s_1 \wapart_{\bar{x}} s_2$,
if $s'_1 \in B_{\bar{x}}(s_1)$ and $s'_2 \in B_{\bar{x}}(s_2)$
then $s'_1 \neq s'_2$.
\end{proposition}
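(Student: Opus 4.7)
The plan is to proceed by contradiction: assume the conclusion fails, so there is a single state $s'$ of the SUT with $s' = s'_1 = s'_2$, and derive a violation of the defining property of strong separability.

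First I would unpack the hypotheses. From $s'_1 \in B_{\bar{x}}(s_1)$, Definition~\ref{def:ball} (instantiated at the single sequence $\bar{x}$) gives $s_1 \notseparates_{\bar{x}} s'_1$; similarly $s_2 \notseparates_{\bar{x}} s'_2$. Under the contradictory assumption $s'_1 = s'_2 = s'$, both $s_1 \notseparates_{\bar{x}} s'$ and $s_2 \notseparates_{\bar{x}} s'$ hold simultaneously.

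Next I would apply the definition of strong separability (Definition~\ref{def:strong_sep}) to $s_1 \wapart_{\bar{x}} s_2$. This asserts that for every $s_3 \in \faultdomain$, either $s_1 \separates_{\bar{x}} s_3$ or $s_2 \separates_{\bar{x}} s_3$. Instantiating with $s_3 = s'$ (which is a state of $M_I$ and therefore lies in $\faultdomain$, since states of FSMs in $\faultdomain$ can themselves be treated as elements of $\faultdomain$, as noted in Section~\ref{section:FSM}) yields a direct contradiction with the previous paragraph.

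There is essentially no hard step: the argument is a one-line instantiation of the strong separability witness at the supposed common SUT-state. The only mild subtlety worth stating explicitly in the write-up is the implicit identification of a state $s'$ of $M_I$ with an element of $\faultdomain$ (via the convention, mentioned earlier in the paper, that any state of an FSM may be regarded as the initial state of an FSM in $\faultdomain$). Once that identification is made, the definitions collide immediately and the proof is complete.
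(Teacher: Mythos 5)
Your argument is correct and is essentially identical to the paper's own proof: both proceed by contradiction, assume a common state $q = s'_1 = s'_2$ of $M_I$ lying in $B_{\bar{x}}(s_1) \cap B_{\bar{x}}(s_2)$, and instantiate Definition~\ref{def:strong_sep} at that state to contradict membership in one of the two balls. Your explicit remark about viewing a state of $M_I$ as an element of $\faultdomain$ is a harmless bit of extra care that the paper leaves implicit.
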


\begin{proof}
We use a proof by contradiction, assuming that there is a state $q$ of $M_I$ such that $q \in B_{\bar{x}}(s_1) \cap B_{\bar{x}}(s_2)$.
By the definition of strong separability (Definition~\ref{def:strong_sep}), since
$s_1 \wapart_{\bar{x}} s_2$, either
$q \separates_{\bar{x}} s_1$
or
$q \separates_{\bar{x}} s_2$.
The first of these contradicts 
$q \in B_{\bar{x}}(s_1)$
and the second contradicts
$q \in B_{\bar{x}}(s_2)$.
The result thus follows.
\end{proof}

The test generation approach will be based on placing a lower bound on the number of  states $M_I$ must have if the SUT is faulty and it passes all the test cases in a given test suite. 
The approach will use
Proposition~\ref{prop_apart_implies_sep},
which tells us that if two input sequences $\bar{x}_1$ and $\bar{x}_2$  reach states $s_1$ and $s_2$ of $M$ respectively, $\bar{x}$ strongly separates $s_1$ and $s_2$ (ie $s_1 \wapart_{\bar{x}} s_2$), and the SUT conforms to the SUT on both $\bar{x}_1.\bar{x}$ and $\bar{x}_2.\bar{x}$ then $\bar{x}_1$ and $\bar{x}_2$ must reach \emph{different} states of $M_I$.

We now consider what we know about the states of the SUT reached by the state cover $V$ if $M_I$ conforms to $M$ on the set of input sequences formed by following each sequence in $V$ by the corresponding state identification set.

\begin{proposition}\label{prop_V_reach}
Let $V$ be a state cover for FSM $M$ whose states are pairwise strongly separable.
Further, let us suppose that every pair $(s_i,q_i)$ of states of the product machine $P(M,M_I)$
reached by a trace in $V$ satisfies the property that $s_i \notseparates_{\wset{s_i}} q_i$.
Then $V$ reaches $n$ separate
states of $M_I$.
\end{proposition}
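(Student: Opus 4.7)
The plan is to leverage Proposition~\ref{prop_apart_implies_sep} to conclude that the states of $M_I$ reached by the $n = |S|$ elements of $V$ must be pairwise distinct. Since $V$ is a state cover, it contains exactly one sequence $v_i$ reaching each state $s_i$ of $M$, giving $|V| = n$. For each $v_i$ let $q_i = \delta_I(q_0, v_i)$ be the state of $M_I$ it reaches, so $(s_i, q_i)$ is the product-machine state reached by $v_i$. By hypothesis, $s_i \notseparates_{\wset{s_i}} q_i$, i.e.\ $q_i \in B(s_i)$. The goal is to show $q_i \neq q_j$ whenever $s_i \neq s_j$.

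Fix distinct $s_i, s_j \in S$ and consider the witness $\bar{x} = \witness{s_i}{s_j} = \witness{s_j}{s_i}$, which by assumption satisfies $s_i \wapart_{\bar{x}} s_j$. The first substantive step is to show that $q_i \in B_{\bar{x}}(s_i)$, that is, $s_i \notseparates_{\bar{x}} q_i$. This requires a short lemma that non-separability is preserved under taking prefixes: if $\lambda(s_i, \bar{u}) \sim \lambda_I(q_i, \bar{u})$ and $\bar{x}$ is a prefix of $\bar{u}$, then $\lambda(s_i, \bar{x}) \sim \lambda_I(q_i, \bar{x})$. This follows directly from the inductive componentwise definition of $\sim$ on output sequences (the output prefix of length $|\bar{x}|$ of a similar pair is itself a similar pair). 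Because $\bar{x}$ was one of the inputs used to construct $\wset{s_i}$ before removing proper prefixes via $RP$, either $\bar{x} \in \wset{s_i}$ or $\bar{x}$ is a proper prefix of some $\bar{u} \in \wset{s_i}$; in either case the hypothesis $s_i \notseparates_{\wset{s_i}} q_i$ together with the prefix lemma yields $s_i \notseparates_{\bar{x}} q_i$, so $q_i \in B_{\bar{x}}(s_i)$. Symmetrically $q_j \in B_{\bar{x}}(s_j)$.

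Now apply Proposition~\ref{prop_apart_implies_sep} to $s_i \wapart_{\bar{x}} s_j$ with the two witnessing ball memberships to conclude $q_i \neq q_j$. Since this holds for every pair of distinct states of $M$, the sequences in $V$ reach $n$ pairwise distinct states of $M_I$, which is the desired conclusion.

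The one delicate point — and the main obstacle — is precisely the prefix argument: the state identifier set $\wset{s_i}$ is obtained by pruning proper prefixes through $RP$, so the specific witness $\witness{s_i}{s_j}$ may no longer appear literally in $\wset{s_i}$. Without the prefix-preservation observation for $\notseparates$, one cannot directly invoke Proposition~\ref{prop_apart_implies_sep}. Once that observation is in place, the rest is a straightforward pairwise application of the strong-separability property.
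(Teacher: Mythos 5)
Your proof is correct and follows essentially the same route as the paper's: for each pair of distinct states $s_i, s_j$ you take the witness $\witness{s_i}{s_j}$, use the hypothesis to place the reached implementation states in $B(s_i)$ and $B(s_j)$, and invoke Proposition~\ref{prop_apart_implies_sep} to conclude they differ, giving $n$ distinct states since $|V| = n$. The only difference is that you explicitly justify the step the paper leaves implicit --- that the witness may survive only as a prefix of an element of $\wset{s_i}$ after the $RP$ pruning, handled by your (correct) observation that non-separability is preserved under taking prefixes, since $\sim$ on output sequences is defined componentwise.
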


\begin{proof}
Since $V$ contains $n$ traces,
it is sufficient to prove that if $v_1$ and $v_2$ are different input sequences in $V$ then they reach different states of $M_I$.
Let us suppose that $v_1 \neq v_2$ and also that $v_1$ and $v_2$ reach the states $(s_1,q_1)$ and $(s_2,q_2)$ of $P(M,M_I)$ respectively.
There is some witness $\witness{s_1}{s_2}$ that $s_1$ and $s_2$ are strongly separable and we will use $\bar{x}$ to denote $\witness{s_1}{s_2}$.
We know that $s_1 \wapart_{\bar{x}} s_2$,
$s_1 \notseparates_{\bar{x}} q_1$ and $s_2 \notseparates_{\bar{x}} q_2$.
But,
by Proposition~\ref{prop_apart_implies_sep},
this means that
$q_1 \neq q_2$
and so $v_1$ and $v_2$ reach different states of $M_I$.
\end{proof}

\begin{example}\label{ex:different-states}
Following up on Examples \ref{ex:state-cover}, \ref{ex:state-identification}, and \ref{ex:ball}, consider the state cover set $\{\varepsilon, s\}$ and the state identification set $\{s\}$ for which the states of specification are strongly separable. 
Now consider $\mathit{Implementation}_0$; the 
state cover set of the specification will take  $\mathit{Implementation}_0$ to $\mathit{Off}_0$ and 
$\mathit{On}_0$, which are clearly different states.

The same observation can be made for    $\mathit{Implementation}_1$. Note that by bringing  $\mathit{Specification}$
    and $\mathit{Implementation}_1$ using $s$ to $(\mathit{On}, \mathit{On}_1)$ and performing the state identification sequence $s$ from there, we arrive at $(\mathit{Off}, \mathit{On}_1)$, which are not similar and will be distinguished using one further input, i.e., by applying the state identification sequence. This is formalised in the subsequent  results. 
\end{example}

The test generation techniques that we build upon use input sequences of the form $v.\bar{x}.w$ such that $v$ is a sequence in the state cover $V$, $\bar{x}$ is any input sequence of a given length $\ell$ (determined by the maximum number of extra states in the SUT), and $w$ is a state identifier for the state $\delta(s_0,v.\bar{x})$ of $M$.
There are two ways in which such an input sequence can lead to a failure being observed.
\begin{enumerate}
\item The SUT $M_I$ does not conform to the specification $M$ on the initial sequence $v.\bar{x}$. If the SUT has output faults (one or more transitions produce the wrong output) then these output faults will lead to such failures.

\item
There are $v$, $\bar{x}$, and $w$ such that the SUT $M_I$ does conform to the specification $M$ on the initial sequence $v.\bar{x}$ but does not conform to the specification on $v.\bar{x}.w$.
In this case, a state identifier for $\delta(s_0,v.\bar{x})$ separates the state of the SUT $M_I$ reached by $v.\bar{x}$ from the state of the specification $M$ reached by $v.\bar{x}$.
\end{enumerate}

The following introduces corresponding notation, for a fixed length $\ell$ of input sequence $\bar{x}$.

\begin{definition}\label{def:d-dw}
Let us suppose that all states of FSM $M$ are strongly separable and $M_I$ is an FSM with the same input and output alphabets as $M$.
Further, let us suppose that $V$ is a state-cover for $M$ and for all $s_i \in S$ we have that $\wset{s_i}$ is a state identification set for $s_i$.
Given integer $\ell>0$:
\[
\begin{array}{lcl}
\diff{\ell} &=& 
\{(v,\bar{x}) | v \in V \wedge |\bar{x}| = \ell \wedge \\ && ~~M \separates_{v \bar{x}} M_I\} \\
\diffw{\ell} &=& 
\{(v,\bar{x}) | v \in V \wedge |\bar{x}| = \ell \wedge
\\
& & ~~\exists w \in \wset{\delta(s_0,v\bar{x})} . \\
&& ~~M \separates_{v \bar{x}w} M_I\}
\end{array}
\]
\end{definition}

We now provide two results that show what one can deduce from the value of the smallest integer $\ell$ such that 
$\diff{\ell} \cup \diffw{\ell} \neq \emptyset$.
Given this $\ell$,
we will
place a lower bound on the number of different states of $M_I$ reached by $V$ and sequences of the form $v.\bar{x}$, for $v \in V$ and input sequence $\bar{x}$ of length at most $\ell$
(Proposition~\ref{prop_min_length}).
We then show that
this implies an upper bound on the value of $\ell$ that we need to use in testing
(Proposition~\ref{corollary_find_failure}).

\begin{lemma}\label{lemma1}
Let us suppose that all states of FSM $M$ are strongly separable
and for all $v \in V$ and $w \in \wset{\delta(s_0,v)}$ we have that
$M \notseparates_{vw} M_I$.
Let us also suppose that
$M_I$ does not conform to $M$
and
$\ell >0$ is the smallest value such that
$\diff{\ell} \cup \diffw{\ell} \neq \emptyset$.
If
$(v,\bar{x}) \in \diff{\ell} \cup \diffw{\ell}$
and $\bar{x}_1$ is a non-empty proper prefix of $\bar{x}$
then the state of $M_I$
reached by $v\bar{x}_1$
is not a state of $M_I$ reached by
an input sequence in $V$.
\end{lemma}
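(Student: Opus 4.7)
The plan is to proceed by contradiction. Suppose some $v' \in V$ reaches the same state $q$ of $M_I$ as $v\bar{x}_1$ does, and set $s_1 = \delta(s_0, v\bar{x}_1)$ and $s' = \delta(s_0, v')$. Two preparatory facts drive the argument. First, the minimality of $\ell$, applied to the length $|\bar{x}_1| < \ell$, gives $(v, \bar{x}_1) \notin \diff{|\bar{x}_1|} \cup \diffw{|\bar{x}_1|}$; hence $M \notseparates_{v\bar{x}_1} M_I$ (so $v\bar{x}_1$ does not drive $P(M,M_I)$ to the error state and $q$ is well-defined) and $s_1 \notseparates_w q$ for every $w \in \wset{s_1}$, i.e., $q \in B(s_1)$. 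Second, the hypothesis on $V \cdot W$ applied to $v'$ gives $s' \notseparates_{w'} q$ for every $w' \in \wset{s'}$, i.e., $q \in B(s')$; the same hypothesis, applied to any non-empty $w' \in \wset{s'}$, also yields $M \notseparates_{v'} M_I$, since similarity of output sequences is preserved under taking prefixes.

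The proof then splits on whether $s_1 \neq s'$ or $s_1 = s'$. In the first case, the two states are strongly separable by hypothesis, so we may take the witness $\bar{x}^* = \witness{s_1}{s'}$ with $s_1 \wapart_{\bar{x}^*} s'$. Since $\wset{s_1}$ is obtained from the family $\{\witness{s_1}{s_j} \mid s_j \in S \setminus \{s_1\}\}$ by removing proper prefixes, some $w \in \wset{s_1}$ has $\bar{x}^*$ as a (possibly improper) prefix; then $s_1 \notseparates_w q$ forces $s_1 \notseparates_{\bar{x}^*} q$ by the componentwise nature of $\sim$ on output sequences. Symmetrically, $s' \notseparates_{\bar{x}^*} q$. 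But Definition~\ref{def:strong_sep} applied with $s_3 = q$ requires that at least one of $s_1 \separates_{\bar{x}^*} q$ and $s' \separates_{\bar{x}^*} q$ hold, and we reach the desired contradiction.

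In the second case, $s_1 = s'$, so both $v\bar{x}_1$ and $v'$ drive the product machine to the common state $(s_1, q)$ without visiting the error state. Writing $\bar{x} = \bar{x}_1 \bar{x}_2$ with $\bar{x}_2$ non-empty, the outputs produced by reading $\bar{x}_2$ (and optionally a further state identifier) from the pair $(s_1, q)$ are the same regardless of which prefix we used to arrive there, and the outputs read along $v\bar{x}_1$ and along $v'$ are each similar between $M$ and $M_I$. Hence $v\bar{x}$ separates $M$ from $M_I$ iff $v'\bar{x}_2$ does, and the same equivalence holds after appending any fixed $w$. Moreover $\delta(s_0, v'\bar{x}_2) = \delta(s_1, \bar{x}_2) = \delta(s_0, v\bar{x})$, so the relevant state-identification sets coincide. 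Thus $(v, \bar{x}) \in \diff{\ell}$ yields $(v', \bar{x}_2) \in \diff{\ell - |\bar{x}_1|}$, and $(v, \bar{x}) \in \diffw{\ell}$ yields $(v', \bar{x}_2) \in \diffw{\ell - |\bar{x}_1|}$; since $1 \le \ell - |\bar{x}_1| < \ell$, either possibility contradicts the minimality of $\ell$.

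The main obstacle is the first case, which requires relating the pruned state-identification set $\wset{s_1}$ to the specific pairwise witness $\witness{s_1}{s'}$. The step relies crucially on the componentwise nature of $\sim$, so that non-separation by a sequence in $\wset{s_1}$ of which $\witness{s_1}{s'}$ is a prefix propagates to non-separation by $\witness{s_1}{s'}$ itself; this is precisely the step that would fail if we had only separability rather than strong separability, and it is what justifies the strengthened definition.
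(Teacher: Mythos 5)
Your proof is correct and follows essentially the same route as the paper's: conformance on the non-empty proper prefix via minimality of $\ell$, membership of the reached implementation states in the balls $B(\cdot)$, a case split on whether the two specification states coincide, strong separability (with $s_3$ taken to be the common implementation state) for the distinct case, and a shorter failing sequence contradicting the minimality of $\ell$ for the coincident case. The only difference is that you spell out a step the paper leaves implicit --- that the pairwise witness $\witness{s_1}{s'}$, possibly pruned by $RP$, is a prefix of some member of $\wset{s_1}$, so non-separation by that member propagates to the witness via the componentwise definition of $\sim$ --- which is a welcome clarification rather than a different argument.
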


\begin{proof}
Since $\bar{x}_1$ is a non-empty proper prefix of $\bar{x}$, there is some non-empty input sequence $\bar{x}_2$ such
that $\bar{x} = \bar{x}_1 \bar{x}_2$.
Further,
by the minimality of $\ell$,
$M_I$ must conform to $M$ on $v\bar{x}_1$ and so,
by Proposition~\ref{prop_error_state}, 
$v\bar{x}_1$ does not take the product machine to the error state.
Let us suppose that $(s_1,q_1)$ is the state of the product machine $P(M,M_I)$ reached by $v\bar{x}_1$.
By the minimality of $\ell$,
we must have that $q_1 \in B(s_1)$.

Let $v_2$ be some input sequence in $V$
and let us suppose that $v_2$ reaches the state $(s_2,q_2)$ of $PM(M,M_I)$.
It is therefore sufficient to prove that $q_1 \neq q_2$.
Since 
for all $v \in V$ and $w \in \wset{\delta(s_0,v)}$ we have that
$M \notseparates_{vw} M_I$,
we know that $q_2 \in B(s_2)$.

There are now two cases to consider.
\begin{enumerate}
\item Case 1: $s_1 \neq s_2$.
Let $w = \witness{s_1}{s_2}$.
We have that $s_1 \wapart_w s_2$,
$q_1 \in B(s_1)$ and $q_2 \in B(s_2)$.
By Proposition~\ref{prop_apart_implies_sep} we have that
$q_1 \neq q_2$ as required.
\item Case 2: $s_1 = s_2$.
We use proof by contradiction and assume that $q_1 = q_2$.
This means that the states of the product machine $P(M,M_I)$ reached by $v_2$ and $v\bar{x}_1$ are the same.

There are two cases.

\begin{enumerate}
\item Case 2a:
$(v,\bar{x}) \in \diff{\ell}$ and so
$M \separates_{v\bar{x}} M_I$.
By the minimality of $\ell$,
we have that 
$M \notseparates_{v\bar{x}_1} M_I$.
Since 
$M \separates_{v\bar{x}} M_I$
and $\bar{x} = \bar{x}_1 \bar{x}_2$
we therefore have that
$s_1 \separates_{\bar{x}_2} q_1$.
But,
since $v_2$ takes the product
machine $PM(M,M_I)$ to state
$(s_2,q_2) = (s_1,q_1)$,
we have that
$M \separates_{v_2\bar{x}_2} M_I$.
Since $\bar{x}_2$ is
shorter than $\bar{x}$,
this tells us that 
there is some $\ell'<\ell$
such that
$\diff{\ell} \neq \emptyset$.
This contradicts the minimality of $\ell$
as required.

\item Case 2b:
$(v,\bar{x}) \in \diffw{\ell}$ and so there exists $w \in \wset{s_1}$ such that
$M \separates_{v\bar{x}w} M_I$.
By the minimality of $\ell$,
we have that 
$M \notseparates_{v\bar{x}_1} M_I$.
Since 
$M \separates_{v\bar{x}w} M_I$
we therefore have that
$s_1 \separates_{\bar{x}_2w} q_1$.
But,
since $v_2$ takes the product
machine $PM(M,M_I)$ to state
$(s_2,q_2) = (s_1,q_1)$,
we have that
$M \separates_{v_2\bar{x}_2w} M_I$.
Since $\bar{x}_2$ is
shorter than $\bar{x}$,
there is some $\ell'<\ell$
such that
$\diffw{\ell'} \neq \emptyset$.
This contradicts the minimality of $\ell$
as required.
\end{enumerate}
\end{enumerate}
\end{proof}

\begin{lemma}\label{lemma2}
Let us suppose that all pairs of states of FSM $M$ are strongly separable
and for all $v \in V$ and $w \in \wset{\delta(s_0,v)}$ we have that
$M \notseparates_{vw} M_I$.
Further, let us suppose that
$M_I$ does not conform to $M$
and
$\ell >0$ is the smallest value such that
$\diff{\ell} \cup \diffw{\ell} \neq \emptyset$.
If
$(v,\bar{x}) \in \diff{\ell} \cup \diffw{\ell}$
and $\bar{x}_1, \bar{x}_2$ are different non-empty proper prefixes of $\bar{x}$
then
$v\bar{x}_1$ and $v\bar{x}_2$
reach different states of $M_I$.
\end{lemma}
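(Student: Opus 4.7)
The plan is to mirror closely the strategy already used in Lemma~\ref{lemma1}. Without loss of generality I assume that $\bar{x}_1$ is a proper prefix of $\bar{x}_2$, and write $\bar{x}_2 = \bar{x}_1 \bar{x}_3$ and $\bar{x} = \bar{x}_2 \bar{x}_4$ where $\bar{x}_3$ and $\bar{x}_4$ are both non-empty (the first because $\bar{x}_1 \neq \bar{x}_2$, the second because $\bar{x}_2$ is a proper prefix of $\bar{x}$). Using the minimality of $\ell$ together with the hypothesis that $M \notseparates_{vw} M_I$ for all $v \in V$ and $w \in \wset{\delta(s_0,v)}$, I would first argue, as in Lemma~\ref{lemma1}, that neither $v\bar{x}_1$ nor $v\bar{x}_2$ separates $M$ from $M_I$, and that if $(s_i,q_i)$ denotes the state of $P(M,M_I)$ reached by $v\bar{x}_i$ for $i=1,2$, then $q_i \in B(s_i)$; otherwise some $(v,\bar{x}_i)$ would witness a non-empty $\diff{\ell'} \cup \diffw{\ell'}$ for $\ell' < \ell$.

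Now suppose for contradiction that $v\bar{x}_1$ and $v\bar{x}_2$ reach the same state of $M_I$, i.e.\ $q_1 = q_2$. I would split on whether $s_1 = s_2$. If $s_1 \neq s_2$, then since $s_1 \wapart_{\witness{s_1}{s_2}} s_2$ and we have $q_1 \in B(s_1)$ and $q_2 \in B(s_2)$, Proposition~\ref{prop_apart_implies_sep} immediately yields $q_1 \neq q_2$, a contradiction.

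The main obstacle is the case $s_1 = s_2$, which requires a loop-contraction argument that was not needed in Lemma~\ref{lemma1}. In this case $(s_1,q_1) = (s_2,q_2)$, and because $v\bar{x}_2 = v\bar{x}_1 \bar{x}_3$ reaches the same product-machine state as $v\bar{x}_1$, the sequence $\bar{x}_3$ is a self-loop at $(s_1,q_1)$. It follows that $v\bar{x}_1 \bar{x}_4$ and $v\bar{x} = v\bar{x}_1 \bar{x}_3 \bar{x}_4$ reach the same state of $P(M,M_I)$ (in particular, the same state of $M$). If $(v,\bar{x}) \in \diff{\ell}$, then by Proposition~\ref{prop_error_state} both $v\bar{x}$ and $v\bar{x}_1\bar{x}_4$ reach the error state, so $(v, \bar{x}_1\bar{x}_4) \in \diff{\ell'}$ with $\ell' = |\bar{x}_1\bar{x}_4| < |\bar{x}| = \ell$. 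If instead $(v,\bar{x}) \in \diffw{\ell}$ and $w \in \wset{\delta(s_0,v\bar{x})}$ satisfies $M \separates_{v\bar{x}w} M_I$, then since $\delta(s_0,v\bar{x}_1\bar{x}_4) = \delta(s_0,v\bar{x})$ the same $w$ lies in $\wset{\delta(s_0,v\bar{x}_1\bar{x}_4)}$, and $v\bar{x}_1\bar{x}_4 w$ reaches the error state as well, putting $(v,\bar{x}_1\bar{x}_4) \in \diffw{\ell'}$ with $\ell' < \ell$. Either conclusion contradicts the minimality of $\ell$, discharging the case and completing the proof.
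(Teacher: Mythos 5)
Your proof is correct and follows essentially the same route as the paper's: a case split on whether $s_1 = s_2$, using Proposition~\ref{prop_apart_implies_sep} when $s_1 \neq s_2$, and otherwise a loop-removal argument in the product machine that produces the shorter witness $v\bar{x}_1\bar{x}_4$ (the paper writes this same sequence as $v\bar{x}_1\bar{x}_3$ with $\bar{x}=\bar{x}_2\bar{x}_3$), contradicting the minimality of $\ell$ in both the $\diff{\ell}$ and $\diffw{\ell}$ subcases. Your explicit remarks that $q_i \in B(s_i)$ follows from minimality and that $\delta(s_0,v\bar{x}_1\bar{x}_4)=\delta(s_0,v\bar{x})$ (so the same $w$ applies) only make explicit what the paper leaves implicit.
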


\begin{proof}
First,
by Proposition~\ref{prop_error_state}, 
$v\bar{x}_1$ and $v\bar{x}_2$ do not take the product machine to the error state.
Let us suppose that $(s_1,q_1)$ is the state of the product machine $P(M,M_I)$ reached by $v\bar{x}_1$ and
$(s_2,q_2)$ is the state of the product machine $P(M,M_I)$ reached by $v\bar{x}_2$.
There are two cases to consider.

\begin{enumerate}
\item Case 1: $s_1 \neq s_2$.
Let $w = \witness{s_1}{s_2}$.
We have that $s_1 \wapart_w s_2$,
$q_1 \in B(s_1)$ and $q_2 \in B(s_2)$.
By Proposition~\ref{prop_apart_implies_sep} we have that
$q_1 \neq q_2$ as required.
\item Case 2: $s_1 = s_2$.
We use proof by contradiction and assume that $q_1 = q_2$.
We therefore have that
$v\bar{x}_1$ and $v\bar{x}_2$ reach the same states of the product machine $P(M,M_I)$.

Without loss of generality assume that $\bar{x}_1$ is a proper prefix of $\bar{x}_2$
and define $\bar{x}_3$ by $\bar{x} = \bar{x}_2 \bar{x}_3$.
There are two cases.

\begin{enumerate}
\item Case 2a:
$(v,\bar{x}) \in \diff{\ell}$ and so
$M \separates_{v\bar{x}} M_I$.
By the minimality of $\ell$,
we have that 
$M \notseparates_{v\bar{x}_2} M_I$.
Since 
$M \separates_{v\bar{x}} M_I$
and $\bar{x} = \bar{x}_2 \bar{x}_3$
we therefore have that
$s_2 \separates_{\bar{x}_3} q_2$.
But,
since $v\bar{x}_1$ takes the product
machine $PM(M,M_I)$ to state
$(s_1,q_1) = (s_2,q_2)$,
we have that
$M \separates_{v\bar{x}_1\bar{x}_3} M_I$.
Since $\bar{x}_1\bar{x}_3$ is
shorter than $\bar{x}$
this tells us that 
there is some $\ell'<\ell$
such that
$\diff{\ell'} \neq \emptyset$.
This contradicts the minimality of $\ell$
as required.

\item Case 2b:
$(v,\bar{x}) \in \diffw{\ell}$ and so there exists $w \in \wset{\delta(s_0,v\bar{x})}$ such that
$M \separates_{v\bar{x}w} M_I$.
By the minimality of $\ell$,
we have that 
$M \notseparates_{v\bar{x}_2} M_I$.
Since 
$M \separates_{v\bar{x}w} M_I$
we therefore have that
$s_2 \separates_{\bar{x}_3w} q_2$.
But,
since $v\bar{x}_1$ takes the product
machine $PM(M,M_I)$ to state
$(s_1,q_1) = (s_2,q_2)$,
we have that
$M \separates_{v\bar{x}_1\bar{x}_3w} M_I$.
Since $\bar{x}_1\bar{x}_3$ is
shorter than $\bar{x}$,
this tells us that 
there is some $\ell'<\ell$
such that
$\diffw{\ell'} \neq \emptyset$.
This contradicts the minimality of $\ell$
as required.
\end{enumerate}
\end{enumerate}
\end{proof}

\begin{example}
For the non-conforming implementation $\mathit{Implementation}_1$, we have that it is sufficient to take $\bar{x}$ to be $s$ since after $s \in V$ and $s \in W$, a further input $s$ can reach a non-conformance verdict between the specification and the implementation. This is because the faulty implementation does not have any additional state with respect to the specification. For faulty implementations with additional states, we may need longer intermediate  inputs to reach non-conformance but there is a fixed upper bound given an upper bound on the size of the implementation, as stated below.  
In terms of Definition \ref{def:d-dw} this is summarised as follows, where, and for brevity we refer to the specification as $M$ and to $\mathit{Implementation}_1$ as $M_I$:
\[
\begin{array}{lcl}
\diff{1} &=& \emptyset \\
\diffw{1} &=& \{  (s, s)\}
\end{array}
\]

Note that $\diff{1}$ is the empty set because the pair with the shortest sequence to distinguish $M$ and $M_I$ is $(s, s.s)$, of which the second component is beyond the allowed limit of $l = 1$. However, $\diffw{1}$ contains the pair $(s, s)$, because after $s.s$ by performing a further $s \in W$, we can distinguish the two systems. 
\end{example}

We can use the above-given Lemmas to place a lower bound on the number of states an SUT must have if the SUT is faulty and we know the smallest value $\ell$ such that $\diff{\ell} \cup \diffw{\ell} \neq \emptyset$.

\begin{proposition}\label{prop_min_length}
Let us suppose that all pairs of states of FSM $M$ are strongly separable, 
$M_I$ does not conform to $M$,
and for all $v \in V$ and $w \in \wset{\delta(s_0,v)}$ we have that
$M \notseparates_{vw} M_I$.
If $\ell$ is the smallest value such that 
$\diff{\ell} \cup \diffw{\ell} \neq \emptyset$
then $M_I$ has at least $n+\ell-1$ states.
\end{proposition}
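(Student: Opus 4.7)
The plan is to produce $n + \ell - 1$ pairwise distinct states of $M_I$ by combining Proposition~\ref{prop_V_reach} with Lemmas~\ref{lemma1} and~\ref{lemma2}. First I would observe that the hypothesis of the proposition---namely $M \notseparates_{vw} M_I$ for every $v \in V$ and every $w \in \wset{\delta(s_0,v)}$---is exactly the hypothesis needed by Proposition~\ref{prop_V_reach}. Applying it gives that the $n$ sequences in $V$ reach $n$ pairwise distinct states of $M_I$; call this set $R_V$.

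Next, I would pick any pair $(v,\bar{x}) \in \diff{\ell} \cup \diffw{\ell}$, which exists by the choice of $\ell$. Since $|\bar{x}|=\ell$, the sequence $\bar{x}$ has exactly $\ell-1$ non-empty proper prefixes, one of each length in $\{1,\dots,\ell-1\}$. For each such prefix $\bar{x}_i$, let $q_i$ be the state of $M_I$ reached by $v\bar{x}_i$. Lemma~\ref{lemma2} then delivers pairwise distinctness of $q_1,\dots,q_{\ell-1}$, and Lemma~\ref{lemma1} delivers that none of these $q_i$ lies in $R_V$ (since each $q_i$ is not reached by any input sequence in $V$).

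Putting the pieces together, $R_V \cup \{q_1,\dots,q_{\ell-1}\}$ is a set of $n + (\ell - 1)$ pairwise distinct states of $M_I$, which yields the required lower bound. Since the two lemmas and the proposition already do all the hard work, the final argument is essentially bookkeeping; the only points to verify carefully are that the hypotheses of each invoked result are genuinely satisfied (in particular that a single arbitrary choice of $(v,\bar{x}) \in \diff{\ell} \cup \diffw{\ell}$ suffices to trigger both lemmas simultaneously, which it does since the lemmas are phrased for an arbitrary such pair) and that the counting is tight: a length-$\ell$ sequence has $\ell-1$ non-empty proper prefixes, not $\ell$, which is exactly what produces $n+\ell-1$ and not $n+\ell$. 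I do not foresee a significant obstacle beyond this careful accounting.
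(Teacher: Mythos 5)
Your proposal is correct and follows essentially the same argument as the paper's proof: invoke Proposition~\ref{prop_V_reach} for the $n$ states reached by $V$, then use Lemma~\ref{lemma1} to show the states reached by $v\bar{x}_1$ for non-empty proper prefixes $\bar{x}_1$ are disjoint from those, and Lemma~\ref{lemma2} for their pairwise distinctness, counting $\ell-1$ such prefixes. The bookkeeping, including the $\ell-1$ count, matches the paper exactly.
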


\begin{proof}
First recall that,
by Proposition~\ref{prop_V_reach},
$V$ reaches
$n$ separate states of $M_I$.
Let $(v,\bar{x})$ be some pair in 
$\diff{\ell} \cup \diffw{\ell}$.

By Lemma~\ref{lemma1},
if $\bar{x}_1$ is a non-empty proper prefix of $\bar{x}$ then the state of $M_I$
reached by $v\bar{x}_1$ is not a state of $M_I$ reached by $V$.
Thus the number of states of $M_I$
must be at least $n = |V|$
plus the number of states of $M_I$
reached by sequences of the form
$v\bar{x}_1$ where $\bar{x}_1$
is a non-empty proper prefix of $\bar{x}$.

We now need to consider how many different states of $M_I$ are reached by 
sequences of the form
$v\bar{x}_1$ where $\bar{x}_1$
is a non-empty proper prefix of $\bar{x}$.
Lemma~\ref{lemma2} tells us that if
$\bar{x}_1$ and $\bar{x}_2$ are different non-empty proper prefixes of $\bar{x}$
then $v\bar{x}_1$ and $v\bar{x}_2$
reach different states of $M_I$.
Thus, 
the number of
different states of $M_I$ reached by 
sequences of the form
$v\bar{x}_1$,
where $\bar{x}_1$
is a non-empty proper prefix of $\bar{x}$,
is equal to the number of different non-empty proper prefixes of $\bar{x}$ and that is $|\bar{x}|-1$.

Bringing these together,
we have that the sequences in $V$ reach $n$ different states of $M_I$ and the sequences of the form $v \bar{x}_1$,
where $\bar{x}_1$ is a non-empty proper prefix of $\bar{x}$,
reach another $|\bar{x}| - 1$ states of $M_I$.
Thus, $M_I$ must have at least $n+|\bar{x}|-1$ states.
It is now sufficient to observe that, by definition, $|\bar{x}| = \ell$.   
\end{proof}

Now consider the standard FSM-based testing context in which
we assume that the FSM $M_I$ that models the SUT has at most $n+k$ states
and we wish to generate sufficient tests to determine whether $M_I$ conforms to $M$.
We can use the above result
(Proposition~\ref{prop_min_length})
to provide an upper bound on the value of $\ell$ that we require to use when testing with sequences of the form
$v.\bar{x}$ and $v.\bar{x}.w$, with $|\bar{x}| \leq \ell$.

\begin{proposition}\label{corollary_find_failure}
Let us suppose that all pairs of states of FSM $M$ are strongly separable, $M_I$ does not conform to $M$, and for all $v \in V$ and $w \in \wset{\delta(s_0,v)}$ we have that
$M \notseparates_{vw} M_I$.
If $M_I$ has at most $k$ more states than $M$ then there exists some $v \in V$,
and $\bar{x} \in X^*$
such that $|\bar{x}| \leq k+1$ and
one of the following holds:
\begin{itemize}
\item $M_I \separates_{v\bar{x}} M$; or
\item there is some $w \in \wset{\delta(s_0,v\bar{x})}$
such that
$M_I \separates_{v\bar{x}w} M$.
\end{itemize}
\end{proposition}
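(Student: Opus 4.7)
The plan is to derive this as a direct consequence of Proposition~\ref{prop_min_length}, whose heavy lifting (via Lemmas~\ref{lemma1} and~\ref{lemma2}) already connects the minimal separating length to a lower bound on the state count of $M_I$. The present proposition simply contrapositively turns that lower bound into an upper bound on the length we have to search.

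The first step is to show that, under the stated hypotheses, there exists at least one integer $\ell \geq 1$ with $\diff{\ell} \cup \diffw{\ell} \neq \emptyset$, so that we can legitimately speak of the smallest such value. Non-conformance of $M_I$ gives (by Proposition~\ref{prop:conformance-separability}) some input sequence $\sigma$ with $M \separates_{\sigma} M_I$. Because $V$ is a prefix-closed state cover containing exactly one representative per state, the sequence reaching $s_0$ must be $\varepsilon$, so $\varepsilon \in V$; hence $(\varepsilon, \sigma)$ witnesses $\diff{|\sigma|} \neq \emptyset$. Moreover, the hypothesis that $M \notseparates_{vw} M_I$ for every $v \in V$ and $w \in \wset{\delta(s_0,v)}$ forces $M \notseparates_v M_I$ for every $v \in V$ (since separation by $v$ would propagate to $vw$), so no pair $(v, \varepsilon)$ can lie in $\diff{0}$ or $\diffw{0}$. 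Therefore the smallest $\ell$ with $\diff{\ell} \cup \diffw{\ell} \neq \emptyset$ satisfies $\ell \geq 1$.

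The second step is to invoke Proposition~\ref{prop_min_length} on this minimal $\ell$: all of its hypotheses (pairwise strong separability of states of $M$, non-conformance of $M_I$, and $M \notseparates_{vw} M_I$ for $v \in V$, $w \in \wset{\delta(s_0,v)}$) are exactly the hypotheses we assumed here. The conclusion is that $M_I$ has at least $n + \ell - 1$ states. Combined with the assumption that $M_I$ has at most $n + k$ states, this yields $n + \ell - 1 \leq n + k$, i.e. $\ell \leq k + 1$.

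The third and final step is to pick any witnessing pair $(v, \bar{x}) \in \diff{\ell} \cup \diffw{\ell}$. By Definition~\ref{def:d-dw} we have $v \in V$, $|\bar{x}| = \ell \leq k+1$, and either $M \separates_{v\bar{x}} M_I$ (which gives the first bullet, using symmetry of $\separates$) or there is some $w \in \wset{\delta(s_0, v\bar{x})}$ with $M \separates_{v\bar{x}w} M_I$ (which gives the second bullet). There is no genuine obstacle in the argument; all the delicate combinatorial work was already carried out in the two preceding lemmas and in Proposition~\ref{prop_min_length}. The only mildly subtle point is the tiny boundary argument that excludes $\ell = 0$ and confirms that the minimum in question is well-defined, which is handled by the $\varepsilon \in V$ observation above.
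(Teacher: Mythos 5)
Your proof is correct and follows essentially the same route as the paper, which simply states that the result follows from Proposition~\ref{prop_min_length} by contradiction; you spell out the same argument (minimal $\ell$, the bound $n+\ell-1 \leq n+k$, hence $\ell \leq k+1$, then pick a witnessing pair). The extra care you take to show the minimal $\ell$ is well-defined (via $\varepsilon \in V$ and a nonempty separating sequence) is a detail the paper leaves implicit, and it is handled correctly.
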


\begin{proof}
This follows from Proposition~\ref{prop_min_length} through a proof by contradiction.
\end{proof}

The following shows what this implies regarding test completeness.

\begin{theorem}
Let us suppose that all pairs of states of FSM $M$ are strongly separable, $M$ has $n$ states, and $M_I$ has at most $m=n+k$ states.
If
$\diff{\ell} = \emptyset$ and
$\diffw{\ell} = \emptyset$ 
for all $\ell \leq k+1$
then $M_I$ conforms to $M$. 
\end{theorem}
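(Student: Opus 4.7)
The plan is to prove the theorem as (essentially) the contrapositive of Proposition~\ref{corollary_find_failure}. I would begin by assuming, for contradiction, that $M_I$ does not conform to $M$. Two of the three hypotheses of that proposition are then immediate from the theorem's statement: all pairs of states of $M$ are strongly separable, and $M_I$ has at most $k$ more states than $M$. The third hypothesis---that for all $v \in V$ and $w \in \wset{\delta(s_0, v)}$, $M \notseparates_{vw} M_I$---is the ``$\bar{x} = \varepsilon$'' case of the hypothesis of the theorem; it corresponds to reading $\diffw{\ell} = \emptyset$ at $\ell = 0$, i.e.\ extending Definition~\ref{def:d-dw} so that the empty $\bar{x}$ is allowed.

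With this precondition secured, I would apply Proposition~\ref{corollary_find_failure} to obtain some $v \in V$ and $\bar{x} \in X^*$ with $|\bar{x}| \leq k+1$ such that either $M \separates_{v\bar{x}} M_I$ or there exists $w \in \wset{\delta(s_0, v\bar{x})}$ with $M \separates_{v\bar{x}w} M_I$. Setting $\ell = |\bar{x}|$, the first case places $(v,\bar{x})$ in $\diff{\ell}$ and the second places $(v,\bar{x})$ in $\diffw{\ell}$; in either case $\ell \leq k+1$, directly contradicting the hypothesis that $\diff{\ell}$ and $\diffw{\ell}$ are empty throughout this range. The proof is then complete.

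The main obstacle is therefore not the main logical flow, which is a one-step reduction, but the minor notational mismatch around $\ell = 0$: Definition~\ref{def:d-dw} insists $\ell > 0$, while the precondition of Proposition~\ref{corollary_find_failure} is naturally the $\ell = 0$ instance of $\diffw{\cdot} = \emptyset$. The cleanest fix is to read the theorem's hypothesis as holding for all $0 \leq \ell \leq k+1$; an alternative is a short case-analysis showing that any separating $vw$ with $v \in V$ and $w \in \wset{\delta(s_0,v)}$ must, by shifting a suitable proper prefix of $w$ into the $\bar{x}$ slot, yield a witness in some $\diff{\ell}$ or $\diffw{\ell}$ with $1 \leq \ell \leq k+1$ (invoking the bound on $|M_I|$ via Proposition~\ref{prop_min_length}). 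Once this bookkeeping is handled, the argument is essentially immediate from previously proved results, with no substantive new ingredient required.
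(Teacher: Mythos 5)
Your proposal is essentially the paper's own proof: the theorem is obtained as the contrapositive of Proposition~\ref{corollary_find_failure}, and your observation that its precondition on the sequences $v.w$ is the ``$\bar{x}=\varepsilon$'' (i.e.\ $\ell=0$) instance of $\diffw{\cdot}=\emptyset$ matches the intended reading --- Algorithm~\ref{alg:HSI} does include $|\bar{x}|=0$ --- so your primary fix of reading the hypothesis over $0 \leq \ell \leq k+1$ is exactly what the paper's one-line proof tacitly assumes. One caution: your alternative bookkeeping fix would not go through as sketched, since the leftover suffix of $w$ after shifting a prefix into the $\bar{x}$ slot need not belong to $\wset{\delta(s_0,v\bar{x})}$, and Proposition~\ref{prop_min_length} presupposes the very precondition you would be trying to establish; but since your main argument does not rely on this alternative, the proof stands.
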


\begin{proof}
This follows from 
Proposition~\ref{corollary_find_failure}.
\end{proof}

This result shows how test generation can proceed:
one simply uses the test sequences implicit in the definitions of 
$\diff{\ell}$ and $\diffw{\ell}$. 
Algorithm~\ref{alg:HSI} gives the corresponding test generation algorithm.
Note that for a state $s_i$ of $M$, $\wset{s_i}$ is a Harmonized State Identifier for $s_i$ as used in the HSI-method except that we require $\wset{s_i}$ to strongly separate $s_i$ from other states of $M$; it is not sufficient that
$\wset{s_i}$ to separates $s_i$ from other states of $M$.
As a result, 
Algorithm~\ref{alg:HSI} is essentially the HSI-method, with the only difference being how we define conformance and the  $\wset{s_i}$.
Since the test set produced by the HSI-method is a subset of those produced by the W-method and the Wp-method, the above demonstrates that these test generation techniques also produce $m$-complete test sets when we have a notion of similarity of outputs rather than equality and we use state identifiers that strongly separate states.

\begin{algorithm}[t]
\caption{Test Generation: The HSI Method Using Strong Separability}
\label{alg:HSI}
\begin{algorithmic}[0]
\State{Input: FSM $M=(S,s_0,X,Y,\delta,\lambda)$ where $S = \{s_1, \ldots, s_n\}$, $m \geq n$}
\State{Derive state cover $V$}
\State{Derive state identification sets $\wset{s_1}, \ldots, \wset{s_n}$ based on strong separability}
\State{Derive $T = \{v.\bar{x}.w | v \in V, \bar{x} \in X^*, 0 \leq |\bar{x}| \leq m-n+1, w \in \wset{\delta(s_0,v.\bar{x})}\}$}
\State{Remove from $T$ all prefixes (ie $T = RP(T)$)}
\State{\textbf{return} $T$}
\end{algorithmic}
\end{algorithm}

\begin{example}
Consider the $\mathit{Specification}$ FSM in Figure \ref{fig:thermostat-spec-discretised}. We have already established in Example \ref{ex:state-cover} that $V = \{\varepsilon, s\}$ and in Example \ref{ex:state-identification} that the state-identification set that strongly separates the specification states is $W = \{s\}$. 
Hence, assuming that we are only interested in implementations that have the same state count as the specification, by applying Algorithm \ref{alg:HSI}, we obtain: 
\[
\begin{array}{lcl}
T &=& RP(\{v.\bar{x}.w | v \in \{\varepsilon, s\} \wedge \bar{x} \in \{\varepsilon, r, s\} \wedge w \in \{s\}\}) \\
T &=& RP(\{\varepsilon.\varepsilon.s, \varepsilon.r.s, \varepsilon.s.s, s.r.s, s.s.s\}) \\
&=& RP(\{s, r.s, s.s, s.r.s, s.s.s\}) \\
&=& \{r.s, s.r.s, s.s.s\}
\end{array}
\]
Note that the key difference between this set of generated inputs and the traditional HSI method is that the adapted algorithm forces the state-identification set to provide strong separability. 
Otherwise, the traditional HSI method can use $r$ to separate states and generate $r.r$, $s.s.r$ and $s.r.r$ as the longest sequences and these sequences will miss the fault in $\mathit{Implementation}_1$. 
This demonstrates that the HSI method, with state identifiers that do not strongly separate the states of the specification, is not complete.
Using this specification FSM, state cover, and state identifiers, the W-method and Wp-method return the same test suite and so are also incomplete. 
\end{example}

It is now worth considering how large the test suite can be for an FSM specification $M$ with $n$ states.
Clearly $V$ contains $n$ input sequences and these have length at most $n-1$: we can generate such input sequences through a breadth-first search.
Further, each $\wset{s_i}$ contains at most $n-1$ sequences.
The following places an upper bound on the lengths of sequences in $\wset{s_i}$, assuming we use shortest sequences that suffice.

\begin{proposition}\label{prop:bound_W_length}
Let us suppose that $M$ is an FSM with $n$ states and all pairs of states of $M$ are strongly separable.
For every pair $s_1, s_2$ of distinct states of $M$, there is an input sequence of length at most $\frac{n(n-1)}{2}$ that strongly separates $s_1$ and $s_2$.
\end{proposition}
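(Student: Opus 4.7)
The plan is to combine a recursive decomposition of strong separability with a simple reachability argument on the graph of pairs of distinct states of $M$. The key lemma I would establish first is: for any input $x \in X$ and input sequence $\bar{x}' \in X^*$,
\[
s_1 \wapart_{x.\bar{x}'} s_2 \iff s_1 \wapart_x s_2 \text{ or } \delta(s_1,x) \wapart_{\bar{x}'} \delta(s_2,x).
\]
The $(\Leftarrow)$ direction just unfolds Definition~\ref{def:strong_sep}: a one-step disagreement in output already separates the full sequence, and separation at the successor pair propagates back through $x$. The $(\Rightarrow)$ direction is the main obstacle. Assume both disjuncts fail: pick $s_3 \in \faultdomain$ whose output on $x$ is similar to both $\lambda(s_1,x)$ and $\lambda(s_2,x)$, and pick $s_3' \in \faultdomain$ whose output on $\bar{x}'$ is similar to both $\lambda(\delta(s_1,x),\bar{x}')$ and $\lambda(\delta(s_2,x),\bar{x}')$. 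Because $\faultdomain$ contains all FSMs over $(X,Y)$, one can assemble an FSM in $\faultdomain$ whose initial state mimics $s_3$ on input $x$ and whose $x$-successor mimics $s_3'$ on $\bar{x}'$. This FSM is not separated from either $s_1$ or $s_2$ by $x.\bar{x}'$, contradicting $s_1 \wapart_{x.\bar{x}'} s_2$.

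Iterating this equivalence by induction on sequence length yields the corollary that a shortest witness for $s_1 \wapart s_2$ has the form $\bar{x}_1.x$, where $\bar{x}_1$ is a shortest prefix such that the reached pair $(\delta(s_1,\bar{x}_1),\delta(s_2,\bar{x}_1))$ is \emph{one-input} strongly separable and $x$ is such a one-input separator. The bound then follows from a reachability argument. Consider the directed graph $G$ whose vertices are the $\binom{n}{2}$ unordered pairs $\{s,s'\}$ of distinct states of $M$, with an edge labelled $x$ from $\{s,s'\}$ to $\{\delta(s,x),\delta(s',x)\}$ whenever the latter is also a pair of distinct states. The corollary reduces the problem of finding a shortest witness for $s_1 \wapart s_2$ to finding a shortest walk in $G$ from $\{s_1,s_2\}$ to a ``one-input strongly separable'' vertex and then appending the corresponding one-input separator. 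Since a shortest walk in a graph with $\binom{n}{2}$ vertices is simple and has length at most $\binom{n}{2}-1$, the resulting witness has length at most $\binom{n}{2}-1+1 = \frac{n(n-1)}{2}$.

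The principal technical hurdle is the $(\Rightarrow)$ direction of the structural lemma, since it crucially uses the full expressiveness of $\faultdomain$ to stitch two obstructing witness behaviours into a single FSM in $\faultdomain$. Once that is in hand, the rest is a standard BFS-style counting argument on the graph of state pairs.
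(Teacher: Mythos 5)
Your proof is correct and follows essentially the same route as the paper: both reduce the bound to counting the distinct unordered pairs of states encountered along a shortest strongly-separating sequence, which is exactly your walk in the graph of state pairs. The only difference is that your decomposition lemma ($s_1 \wapart_{x.\bar{x}'} s_2$ iff $s_1 \wapart_{x} s_2$ or $\delta(s_1,x) \wapart_{\bar{x}'} \delta(s_2,x)$), including the stitching of the two obstructing machines from $\faultdomain$ for the forward direction, makes explicit the step the paper merely asserts, namely that each suffix of a shortest witness is itself a shortest witness for the corresponding intermediate pair of states.
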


\begin{proof}
Given any pair $(s_1,s_2)$ of distinct states of $M$ let $d(s_1,s_2)$ denote the length of the shortest input sequence(s) that strongly separate $s_i$ and $s_j$.

Choose two states $s_1$ and $s_2$ and let us suppose that $d(s_1,s_2) = a$. It is sufficient to prove that $a \leq \frac{n(n-1)}{2}$.

Let $\bar{x} = x_1 \ldots x_a$ be a shortest input sequence that strongly separates $s_1$ and $s_2$.
For $i \in \{1,2\}$ and $1 \leq j < a$ define $s_i^j = \delta(s_i, x_1 \ldots x_j)$.
Consider some $1 \leq j < a$ and states $s_1^j$ and $s_2^j$.
Since 
$\bar{x} = x_1 \ldots x_a$ is a shortest input sequence that strongly separates $s_1$ and $s_2$,
we must have that 
$x_{j+1} \ldots x_a$ is a shortest input sequence that strongly separates $s_1^j$ and $s_2^j$.
By the minimality of these input sequences, for all $1 \leq j < k < a$ we have that
$\{s_1^i,s_2^i\} \neq \{s_1^j,s_2^j\}$.
Thus, the length of $\bar{x}$ is bounded by the number of unordered pairs of distinct states of $M$, which is $\frac{n(n-1)}{2}$.
The result therefore follows.
\end{proof}

Thus, the sizes of the state cover and state identification sets are low-order polynomial.
Naturally, the size of the test suite grows exponentially as the upper bound, $k$, on the number of extra states grows.
However, this is also the case with techniques such as the HSI-method when applied in the classical FSM context.
In practice, the value of $k$ used might depend upon domain knowledge and a cost/benefit analysis.

It is finally worth commenting on the computational complexity of deriving the state cover and state identifiers. Clearly, a state cover can be derived in low-order polynomial time since breadth-first search takes linear time.
It is also possible
to derive the state identifiers in polynomial time.
In particular, it is possible to define a simple iterative algorithm based on the proof of Proposition~\ref{prop:bound_W_length},
with this operating as follows.
First, we determine which pairs of states can be strongly separated by a single input and record these (we could say that these are strongly $1$-separable).
We then determine which of the remaining pairs $\{s_i,s_j\}$ of states are strongly separated by input sequences of length $2$ (they are strongly $2$-separable).
In order to check this, for each pair $\{s_i,s_j\}$ of states (that are not strongly $1$-separable) we check whether there is an input $x$ such that states $\delta(s_i,x)$ and $\delta(s_j,x)$ are strongly $1$-separable.
This process continues to iterate:
in each iteration,
for each pair $\{s_i,s_j\}$ of states not already strongly separated, we determine whether there is an input $x$ such that states $\delta(s_i,x)$ and $\delta(s_j,x)$ have already been strongly separated in an earlier iteration.
Clearly, each iteration takes polynomial time and, from Proposition~\ref{prop:bound_W_length}, we know that there are at most  $\frac{n(n-1)}{2}$ iterations.


\section{Conclusions}\label{section:conclusions}
In this paper, we proposed the notion of strong separability, which is inspired by apartness in constructive mathematics. This notion allows for separating states according to approximate notions of conformance, i.e., notions that allow for a margin of error, in quantitative finite-state machine models of systems. We showed the applicability of our notion by adopting it in a well-known model-based testing technique, called the Harmonized State Identifier (HSI) method. We proved that using our notion, the HSI method is complete for approximate notions of conformance. We exemplified and illustrated our approach through a simple example of a thermostat and its correct and incorrect implementations. 
We also used this example to demonstrate that complete test suites need not be produced if we use separability rather than strong separability.

Recall that the approach described in this paper operates by reasoning about the number of different states of the SUT met during testing and does this by strongly separating states.
We have seen that this is consistent with the W-method, the Wp-method and the HSI-method.
These three approaches differ in the choice of state identification sequences used but they keep this choice fixed throughout test generation.
Dorofeeva et al. \cite{DorofeevaEY05} introduced the  H-method,
which allows
different state identification sequences to be used for different input sequences that follow sequences from $V$.
It may be possible to extend the results given in this paper to the H-method by, instead of using a fixed input sequence
$\witness{s_1}{s_2}$ to separate two states of $M$, allowing the choice of input sequence to vary.
Specifically, if we are considering an input sequence $v\bar{x}$ where $v \in V$
then the $\witness{s_1}{s_2}$ used could depend on $v$ and $\bar{x}$.

There are a number of other lines of possible future work that correspond to the following limitations of the approach described in this paper.
First, we abandoned the use of the observation tree 
used by Vaandrager~\cite{Vaandrager24}
since a correct SUT might produce traces that are not traces of the specification $M$;
it may well be that it is possible to reason about observation trees that are `similar' to an observation tree of $M$.
Second,
this paper has concerned deterministic, completely-specified FSMs and the test generation algorithm assumes that the states of $M$ are pairwise strongly separable. Although these requirements are weaker than those imposed by the work that uses apartness, there is the interesting question of how they might be further weakened. In particular, there is the challenge of developing test generation algorithms that return $m$-complete test suites when some states of $M$ are not strongly separable and also when $M$ is partial and/or non-deterministic.

Our adaptation of apartness in the quantitative setting, enables the application of recent automata learning algorithms to this setting as well. 
We will be looking into using our 
stronger notion of apartness to develop novel automata learning algorithms by integrating algorithms that use apartness \cite{VaandragerGRW22,VaandragerS25} with those that consider quantitative extensions of state machines \cite{MedhatRBF15,BacciILR21,TapplerA0EL21}. 

\section*{Acknowledgments}

Robert M. Hierons was
partially supported by the UKRI Trustworthy Autonomous Systems Node in Verifiability, Grant Award Reference EP/V026801/2 and the
EPSRC grant RoboTest: Systematic Model-Based Testing and Simulation of Mobile Autonomous Robots, Grant Award Reference EP/R025134/1.
Mohammad Reza Mousavi have been partially supported by the UKRI Trustworthy Autonomous Systems Node in Verifiability, Grant Award Reference EP/V026801/2 and the EPSRC grant on  Verified Simulation for Large Quantum Systems (VSL-Q), Grant Award Reference EP/Y005244/1.

\bibliographystyle{plain}
\bibliography{refs}

\end{document}